\newcommand{\blind}{1} 
\newcommand{\mR}{\mathbb{R}}
\newcommand{\Pn}{\mathbb{P}_n}
\newcommand{\bX}{\mathbf{X}}
\newcommand{\bI}{\mathbf{I}}
\newcommand{\bA}{\mathbf{A}}
\newcommand{\bxi}{\bm{\xi}}
\newcommand{\bbeta}{\bm{\beta}}
\newcommand{\bx}{\bm{x}}
\newcommand{\bW}{\bm{W}}
\newcommand{\bTheta}{\bm{\Theta}}
\newcommand{\bSigma}{\bm{\Sigma}}
\newcommand{\E}{\mathbb{E}}
\newcommand{\lambdamax}{\lambda_{\mathrm{max}}}
\newcommand{\lambdamin}{\lambda_{\mathrm{min}}}
\newcommand{\ba}{\bm{a}}
\newcommand{\bb}{\bm{b}}
\newcommand{\bz}{\bm{z}}
\newcommand{\OP}{\mathcal{O}_{\mathbb{P}}}
\newcommand{\oP}{o_{\mathbb{P}}}
\newcommand{\balpha}{\bm{\alpha}}
\newcommand{\bDelta}{\bm{\Delta}}
\newtheorem{theorem}{Theorem}
\newtheorem{lemma}{Lemma}
\newtheorem{remark}{Remark} 
\begin{document}

	\def\spacingset#1{\renewcommand{\baselinestretch}%
		{#1}\small\normalsize} \spacingset{1}


	
	\if1\blind
	{
		\title{\bf A Revisit to De-biased Lasso for Generalized Linear Models}
		\author{Lu Xia$^1$
			\thanks{Supported in part by NIH R01 AG056764},
			Bin Nan$^2$
			\thanks{Supported in part by NIH R01 AG056764 and NSF DMS-1915711}
			~ and Yi Li$^1$
			\thanks{Supported in part by NIH R01 AG056764 and U01CA209414} 
			\\
			$^1$ Department of Biostatistics, University of Michigan, Ann Arbor \\
			$^2$ Department of Statistics, University of California, Irvine 
		}
		\date{}
		\maketitle
	} \fi
	
	\if0\blind
	{
		\bigskip
		\bigskip
		\bigskip
		\title{\bf A Revisit to De-biased Lasso for Generalized Linear Models}
		\author{}
		\date{}
		\maketitle
	} \fi
	
	\bigskip
	\begin{abstract}
		De-biased lasso  has emerged as a popular tool to draw statistical inference for high-dimensional regression models. However, simulations indicate that for generalized linear models (GLMs), de-biased lasso  inadequately removes biases and yields unreliable confidence intervals. This motivates us to scrutinize the application of de-biased lasso in high-dimensional GLMs. When $p >n$, we detect that a key sparsity condition on the inverse information matrix generally does not hold in a GLM setting, which likely explains the subpar performance of de-biased lasso. Even in a less challenging ``large $n$, diverging $p$" scenario, we find that de-biased lasso and the maximum likelihood method often yield confidence intervals with unsatisfactory coverage probabilities. In this scenario, we examine an alternative approach for further bias correction by directly inverting the Hessian matrix without imposing the matrix sparsity assumption. We
		establish the asymptotic distributions of any linear combinations of the resulting estimates, which lay the theoretical groundwork for drawing inference. Simulations show that this refined de-biased estimator performs well in removing biases and yields an honest confidence interval coverage. We illustrate the method by analyzing a prospective hospital-based Boston Lung Cancer Study, a large scale epidemiology cohort investigating the joint effects of genetic variants on lung cancer risk. \\
		
		\noindent%
		{\it Keywords:}  Confidence interval; Coverage; High-dimension; Inverse of information matrix; Statistical inference.
	\end{abstract}
	
	\newpage

	\spacingset{1.45} 
	\section{Introduction}
	\label{sec:intro}
	
	
	Traditional  genome-wide association studies typically screen marginal associations between single nucleotide polymorphisms (SNPs) and complex traits. However, the marginal approach does not take into account the complicated structural relationships among SNPs. Jointly modeling the effects of SNPs within target genes  can pinpoint  functionally impactful loci in the coding regions \citep{taylor2001using, repapi2010genome}, better understand the molecular mechanisms underlying complex diseases \citep{guan2011bayesian}, reduce false positives around true causal SNPs and improve prediction accuracy \citep{he2010variable}. For example, in the Boston Lung Cancer Study (BLCS), which investigates molecular mechanisms underlying the lung cancer, an analytical goal is to study  the joint effects of genetic variants residing in multiple disease related pathway genes on lung cancer risk. The results can potentially aid personalized medicine as individualized therapeutic interventions are only possible with proper characterization of relevant SNPs in pharmacogenomics \citep{evans2004moving}. Statistically, this requires reliable inference on  high-dimensional regression models.

	It is of great interest, though with enormous challenges, to draw  inference when the number of covariates grows with the sample size. When the number of covariates  exceeds the sample size, the well known ``large $p$, small $n$" scenario,  maximum likelihood estimation (MLE) is no longer feasible and regularized variable selection methods have been developed over the decades. These include the lasso method  \citep{tibshirani1996regression}, the elastic net method  \citep{zou2005regularization}, and the Dantzig selector \citep{candes2007dantzig}, among many others. However,  these regularized methods yield biased estimates, and thus cannot be directly used for drawing statistical inference, in particular,  constructing confidence intervals with a nominal coverage. Even when the number of covariates is smaller than the sample size but can increase with $n$, conventional methods may still not be trustworthy.  \cite{sur2019modern} showed that  MLE for high-dimensional logistic regression models can overestimate the magnitudes of non-zero effects while underestimating the variances of the estimates when the number of covariates is smaller than, but of the same order as, the sample size. We encountered the same difficulty when applying  MLE to the analysis of BLCS data.

	Advances to address these challenges have been made recently. One stream of methods is post-selection inference conditional on selected models \citep{lee2016exact}, which ignores the uncertainty associated with model selection. Other super-efficient procedures, such as SCAD \citep{fan2001variable} and adaptive lasso \citep{zou2006adaptive}, share the flavor of post-selection inference.  Another school of methods is to draw inference by de-biasing the lasso estimator, termed de-biased lasso or de-sparsified lasso, which  relieves the restrictions of post-selection inference and has been shown to possess nice theoretical and numerical properties in linear regression models (\citealt{van2014asymptotically, zhang2014confidence, javanmard2014confidence}). When coefficients have group structures, various extensions of de-biased lasso have been proposed \citep{zhang2017simultaneous, dezeure2017high, mitra2016benefit, cai2019sparse}.
	
	De-biased lasso has seen applications beyond linear models.  For example, \citet{van2014asymptotically} considered the de-biased lasso approach in generalized linear models (GLMs) and developed the asymptotic normality theory for each component of the coefficient estimates; \citet{zhang2017simultaneous} proposed a multiplier bootstrap procedure to draw inference on a group of coefficients in GLMs, yet without sufficient numerical evidence for the performance; \citet{eftekhari2019inference} considered a de-biased lasso estimator for a low-dimensional component in a generalized single-index model with an unknown link function and  restricted to an elliptically symmetric design.

	However, in the GLM setting, our extensive simulations reveal that biases cannot be adequately removed by the existing de-biased lasso methods. Even after de-biasing, the biases are still too large relative to the model based standard errors, and the resulting confidence intervals have much lower coverage probabilities than the nominal level. Scrutiny of the existing theories points to a key assumption: the inverse of the Fisher information matrix is sparse (see \citealt{van2014asymptotically}). For linear regression, this assumption amounts to that the precision matrix for the covariates is sparse.
	It, however, is unlikely to hold in GLM settings, even when the precision matrix for the covariates is indeed sparse.
	
	This begs a critical question: when can we obtain reliable inference results using de-biased lasso? Deviated from the aforementioned works which mainly focused on hypothesis testing, we are concerned with making reliable inference, such as eliminating estimation bias and obtaining good confidence interval coverage. 
	We consider two scenarios: the ``large $p$, small $n$" case where $p > n$, and the ``large $n$, diverging $p$" case where $p$ increases to infinity with $n$ but $p/n \rightarrow 0$. In the first scenario, we discuss a key sparsity assumption in GLMs, which is  likely to fail and  compromise the validity of de-biased lasso. In the second scenario, we consider a natural alternative for further bias correction, by directly inverting the Hessian matrix. We study its theoretical properties and use simulations to demonstrate its advantageous performance to the competitors. 
	
	The remainder of the paper is organized as follows. Section 2 briefly reviews de-biased lasso in GLMs. In Section 3, we exemplify the performance of the original de-biased lasso estimator using simulated examples and elaborate on the theoretical limitations. In Section 4, under the ``large $n$, diverging $p$" regime, we introduce a refined de-biased approach as an alternative to the node-wise lasso estimator for the inverse of the information matrix \citep{van2014asymptotically}, and establish asymptotic distributions for any linear combinations of the refined de-biased estimates. We  provide simulation results and analyze the Boston Lung Cancer Study that investigates the joint associations of SNPs in nine candidate genes with lung cancer. We conclude with the summarized findings in Section 5. Additional numerical results are provided in the online supplementary material.

	\section{Background}

	\subsection{Notation}
	
	We define  commonly used notation. Denote by $\lambdamax$ and $\lambdamin$  the largest and the smallest eigenvalue of a symmetric matrix.
	For a real matrix $\bA = (A_{ij})$, let $\| \bA \| = [\lambdamax(\bA^T \bA)]^{1/2}$ be the spectral norm. The induced matrix $\ell_1$ norm is $\|\bA\|_1 = \max_j \sum_i  |A_{ij}|$, and  when $\bA$ is symmetric,  $\|\bA\|_1 = \max_i \sum_j  |A_{ij}|$. The entrywise $\ell_{\infty}$ norm is $\|\bA\|_{\infty} = \max_{i,j} |A_{ij}|$. For a vector $\ba$, $\|\ba\|_q$ denotes the $\ell_q$ norm, $q \ge 1$. We write $x_n \asymp y_n$ if $x_n = \mathcal{O}(y_n)$ and $y_n = \mathcal{O}(x_n)$. 
	
	\subsection{Generalized linear models}
	
	Denote by $y_i$ the response variable and $\bx_i = (1, \widetilde{\bx}_i^T)^T \in \mathbb{R}^{p+1}$ for $i = 1,\cdots, n$, where the first element in $\bx_i$ corresponds to the intercept, and the rest elements $\widetilde{\bx}_i$ represent $p$ covariates. Let $\bX$ be an $n \times (p+1)$ covariate matrix with $\bx_i^T$ being the $i$th row.
	We assume  that $\left\{(y_i, \bx_i)\right\}_{i=1}^n$ are independently and identically distributed (i.i.d.) copies of $(y,\bx)$. 
	Define  the negative log-likelihood function (up to a constant irrelevant to the unknown parameters) when the conditional density of $y$ given $\bx$ belongs to the linear exponential family:
	\begin{equation} \label{eq:loss_func}
	\rho_{\bxi}(y, \bx) \equiv \rho(y, \bx^T \bxi) = - y \bx^T \bxi + b(\bx^T \bxi)
	\end{equation}
	where $b(\cdot)$ is a known twice continuously differentiable function, $\bxi = (\beta_0, \bbeta^T)^T \in \mR^{p+1}$ denotes the vector of regression coefficients and  $\beta_0 \in \mR$ is the intercept parameter. The unknown true coefficient vector is  $\bxi^0 = (\beta_0^0, {\bbeta^{0}}^T)^T$.

	\subsection{De-biased lasso}
	
	Consider the loss function $\rho_{\bxi}(y, \bx) \equiv \rho(y, \bx^T \bxi)$ given in (\ref{eq:loss_func}). Denote its first and second order derivatives with respect to $\bxi$ by $\dot{\bm{\rho}}_{\bxi}$ and $\ddot{\bm{\rho}}_{\bxi}$, respectively. For any function $g(y, \bx)$, let $\Pn g = \displaystyle \frac{1}{n} \sum_{i=1}^n g(y_i, \bx_i)$. Then for any $\bxi \in \mathbb{R}^{p+1}$, we denote the empirical loss function based on the random sample $\{ (y_i, \bx_i) \}_{i=1}^n$ by $\Pn\rho_{\bxi} \equiv \displaystyle \frac{1}{n} \sum_{i=1}^n \rho_{\bxi}(y_i, \bx_i)$, and its first and second order derivatives with respect to $\bxi$ by $\Pn \dot{\bm{\rho}}_{\bxi} = \displaystyle \frac{1}{n} \sum_{i=1}^n \frac{\partial \rho_{\bxi}(y_i, \bx_i)}{\partial \bxi}$ and $\widehat{\bSigma}_{\bxi} \equiv \Pn \ddot{\bm{\rho}}_{\bxi} = \displaystyle \frac{1}{n} \sum_{i=1}^n \frac{\partial^2 \rho_{\bxi}(y_i, \bx_i)}{\partial \bxi \partial \bxi^T}$. Two important population-level matrices are the expectation of the Hessian matrix, $\bSigma_{\bxi} \equiv \E\widehat{\bSigma}_{\bxi} = \E (\Pn \ddot{ \bm{\rho}}_{\bxi})$, and its inverse $\bTheta_{{\bxi}} \equiv \bSigma_{\bxi}^{-1}$. With $\lambda > 0$, the lasso estimator for $\bxi^0$ is defined as
	\begin{equation} \label{eq:lasso}
	\widehat{\bxi} = \mathop{\arg\min}_{\bxi = (\beta_0, \, \bbeta^T)^T \in \mR^{p+1}} \left\{ \Pn\rho_{\bxi} + \lambda \| \bbeta \|_1 \right\}.
	\end{equation}
	To avoid ambiguity, we do not penalize the intercept $\beta_0$ in (\ref{eq:lasso}). The theoretical results such as prediction and $\ell_1$ error bounds, however, are the same as those in   \citet{van2008high} and \citet{van2014asymptotically} where all the coefficients are penalized \citep{buhlmann2011statistics}. \cite{van2014asymptotically} applied the node-wise lasso method to obtain an estimator $\widehat{\bTheta}$ for $\bTheta_{\bxi^0}$, and  proposed  a de-biased lasso estimator for $\xi_j^0$ with:
	\[
	\widehat{b}_j \equiv \widehat{\xi}_j - \widehat{\bTheta}_j \Pn \dot{\bm{\rho}}_{\widehat{\bxi}},
	\]
	where  $\hat{\sigma}_j \equiv \sqrt{\widehat{\bTheta}_j \widehat{\bSigma}_{\widehat{\bxi}} \widehat{\bTheta}_j^T / n}$ is the model based standard error for $\widehat{b}_j$. Here, $\widehat{\bTheta}_j$ is the $j$th row of $\widehat{\bTheta}$.

	\section{The ``large $p$, small $n$" scenario}
	\label{sec:largep}
	
	Even though the asymptotic theory has been developed for the ``large $p$, small $n$" scenario \citep{van2014asymptotically}, we examine why  de-biased lasso performs unsatisfactorily in GLMs.

	\subsection{A simulation study}
	
	We present a simulation study that features a logistic regression model with $n=300$ observations and $p=500$ covariates. For simplicity, covariates are simulated from $N_{p}(\mathbf{0}, \bSigma_x)$, where $\bSigma_{x, ij} = 0.7^{|i-j|}$, and truncated at $\pm 6$. In the true coefficient vector $\bbeta^0$, the intercept $\beta^0_0 = 0$ and $\beta^0_1$ varies from 0 to 1.5 with 40 equally spaced increments. To examine the impacts of different true model sizes, we arbitrarily choose 2, 4 or 10 additional coefficients from the rest in $\bbeta^0$, and fix them at 1 throughout the simulation. At each value of $\beta_1^0$, a total of 500 simulated datasets are generated. We focus on the de-biased estimates and inference for $\beta_1^0$. 
	
	Figure \ref{fig:sim_largep}, with the true model size increasing from the top to the bottom, shows that the de-biased lasso estimate for $\beta_1^0$ has a bias almost linearly increasing with the true  size of $\beta_1^0$. This  undermines the credibility of the consequent confidence intervals. Meanwhile, the model-based variance does not approximate the true variance well, overestimating the variance for smaller signals and underestimating for larger ones in the two smaller models, as shown by the top two rows in Figure \ref{fig:sim_largep}. This partially explains the over- and under-coverage for smaller and larger signals, respectively. Due to penalized estimation in $\widehat{\bTheta}$, the  variance of the de-biased lasso estimator is even smaller than the ``Oracle" maximum likelihood estimator obtained as if the true model were known; see the bottom two rows in Figure \ref{fig:sim_largep}. The empirical coverage probability decreases to about 50\% as the signal $\beta_1^0$ goes to 1.5, and when the true model size reaches 5;  see the middle row in Figure \ref{fig:sim_largep}. The bias correction is sensitive to the true model size, which becomes worse for larger true models. We have also conducted simulations by changing the covariance structure of covariates to be  independent or compound symmetry with correlation coefficient 0.7 and variance 1, and  have obtained similar results.

	\subsection{Reflections on the validity of theoretical assumptions}

	\citet{van2014asymptotically} established the asymptotic properties of the de-biased lasso estimator in GLMs under certain regularity conditions (see Section 3 of \citealt{van2014asymptotically}), which are imposed to regularize the behavior of the lasso estimator $\widehat{\bxi}$  and the estimated matrix $\widehat{\bTheta}$. \citet{van2014asymptotically} employed the node-wise lasso estimator for $\bTheta_{{\bxi^0}}$, which was originally proposed by \citet{meinshausen2006high} for covariance selection in high-dimensional graphs.
	
	We now revisit the de-biased lasso estimator and its decomposition. The first order Taylor expansion of $\mathbb{P}_n \dot{{\bm{\rho}}}_{\bxi^0}$ at $\widehat{\bxi}$ gives
	\begin{equation} \label{eq:taylor}
	\mathbb{P}_n \dot{{\bm{\rho}}}_{\bxi^0} = \mathbb{P}_n \dot{\bm{\rho}}_{\widehat{\bxi}} + \mathbb{P}_n \ddot{\bm{\rho}}_{\widehat{\bxi}} (\bxi^0 - \widehat{\bxi}) + \bm{\Delta},
	\end{equation}
	where $\bm{\Delta}$ is a $(p+1)$-dimensional vector of remainder terms with its $j$th element 
	\begin{equation} \label{eq:taylor_remainder}
	\Delta_j  = \displaystyle \frac{1}{n} \sum_{i=1}^n  \left(  \ddot{\rho}(y_i, a_j^*) - \ddot{\rho}(y_i, \bx_i^T \widehat{\bxi}) ~ \right) x_{ij} \bx_i^T (\bxi^0 - \widehat{\bxi}),
	\end{equation}  
	in which  $\ddot{\rho}(y,a) \equiv \displaystyle \frac{\partial^2 \rho(y,a)}{\partial a^2}$, and $a_j^*$ lies between $\bx_i^T\widehat{\bxi}$ and $\bx_i^T\bxi^0$. It follows that $\bm{\Delta} = \bm{0}$ in linear regression models,  but generally non-zero in GLMs. Multiplying both sides of (\ref{eq:taylor}) by $\widehat{\bTheta}_j$ and re-organizing the terms, we obtain the following equality for the $j$th component
	\begin{equation} \label{eq:derive_bhat}
	\left[ ~ \widehat{\xi}_j  + \overbrace{ \left( - \widehat{\bTheta}_j \mathbb{P}_n \dot{\bm{\rho}}_{\widehat{\bxi}}   \right) }^{I_j} 
	+ \overbrace{\left( -\widehat{\bTheta}_j \bm{\Delta} \right) }^{II_j} 
	+ \overbrace{\left( \widehat{\bTheta}_j \mathbb{P}_n \ddot{\bm{\rho}}_{\widehat{\bxi}} - \bm{e}_j^T \right)  \left( \widehat{\bxi} - \bxi^0 \right)}^{III_j}  ~ \right] -  \xi^0_j = - \widehat{\bTheta}_j \mathbb{P}_n \dot{\bm{\rho}}_{{\bxi}^0},
	\end{equation}
	where $\bm{e}_j$ is a $(p+1)$-dimensional vector with the $j$th element being 1 and 0 elsewhere. 
	We define three terms $I_j =  - \widehat{\bTheta}_j \mathbb{P}_n \dot{\bm{\rho}}_{\widehat{\bxi}}$, $II_j = -\widehat{\bTheta}_j \bm{\Delta}$ and $III_j = \left( \widehat{\bTheta}_j \mathbb{P}_n \ddot{\bm{\rho}}_{\widehat{\bxi}} - \bm{e}_j^T \right) \left( \widehat{\bxi} - \bxi^0 \right)$. They are crucial in studying the bias behavior of the de-biased lasso estimator that can be alternatively expressed as $\widehat{b}_j = \widehat{\xi}_j + I_j$. According to  (\ref{eq:derive_bhat}), as long as $\sqrt{n}~ II_j / \hat{\sigma}_j = \oP(1)$, $\sqrt{n}~ III_j /\hat{\sigma}_j  = \oP(1)$,  and  $\sqrt{n}~ \widehat{\bTheta}_j \Pn \dot{\bm{\rho}}_{\bxi^0} / \hat{\sigma}_j $ is asymptotically normal, the asymptotic normality of $\sqrt{n} \left( \widehat{b}_j - \xi_j^0 \right) / \hat{\sigma}_j $ follows directly. 
	
	The de-biased lasso approach requires an appropriate inverse matrix estimator with $\mathcal{O}(p^2)$ unknown parameters. In the ``large $p$, small $n$" scenario, where the number of covariates can be  as large as $o (\exp(n^a))$ for some $a>0$, the $(p+1) \times (p+1)$ inverse information matrix is not estimable without further assumptions on the structure of $\bTheta_{{\bxi^0}}$. This inevitably needs regularization, and $\ell_1$-type regularization is often adopted due to its theoretical readiness. An important assumption on $\bTheta_{{\bxi^0}}$ in \citet{van2014asymptotically} is the $\ell_0$ sparsity, i.e. the number of non-zero elements of each row in $\bTheta_{{\bxi^0}}$ is small. This assumption is vital for the consistency of $\widehat{\bTheta}_j$ to $\bTheta_{{\bxi^0},j}$ and consequently the model-based variance, and impacts the negligibility of  term $III_j$ in (\ref{eq:derive_bhat}). In particular, the third bias term in (\ref{eq:derive_bhat}) $III_j$  is non-negligible if the convergence rate of $\widehat{\bTheta}_j$ to $\bTheta_{\bxi^0,j}$, which depends on the $\ell_0$ sparsity of the row vector $\bTheta_{\bxi^0,j}$ using the node-wise lasso estimation, is not fast enough. 
	
	However,  these  sparsity assumptions have not been  clarified in the existing literature, except for linear regression models. In a linear regression model, $\bTheta_{\bxi^0}$ is the precision matrix for covariates which is free of $\bxi^0$, and for multivariate Gaussian covariates, a zero element of $\bTheta_{\bxi^0}$ implies conditional independence between corresponding covariates. In contrast, the row sparsity assumption on $\bTheta_{\bxi^0}$ does not have a clear interpretation in GLMs, and may not be valid as it depends on the unknown $\bxi^0$. In the information matrix $\bSigma_{\bxi^0}$,  its $(j,k)$-th element is $\E\left[x_{ij} x_{ik} \ddot{\rho}(y_i, \bx_i^T \bxi^0) \right] = \E\left[x_{ij} x_{ik} \ddot{b}(\bx_i^T\bxi^0)\right]$. In the most extreme case where all covariates are independent with mean zero, $\bSigma_{\bxi^0, jk} = 0$ for $j \ne k, j= 2, \cdots, p+1$, $k \in \{ k :  2 \le k \le p+1, \xi^0_k = 0 \}$, and then $\bTheta_{\bxi^0}$ is sparse if the true model $\{ j:  1 \le j \le p, ~ \beta_j^0 \ne 0 \}$ is small.  With covariates generally correlated, it is unconceivable that most off-diagonal elements in $\bTheta_{\bxi^0}$ are zero, because $\ddot{b}(\bx_i^T \bxi^0) = \ddot{b}(\beta^0_0 + \widetilde{\bx}_i^T \bbeta^0)$ also depends on the covariates $\widetilde{\bx}_i$ in a GLM, even when the precision matrix for $\widetilde{\bx}_i$ is sparse \textit{per se}. This makes the sparsity  assumption for $\bTheta_{\bxi^0}$ obscure in GLMs. To see this, consider the Poisson regression, which has a closed-form expression for $\bTheta_{\bxi^0}$. Assume the covariates $\widetilde{\bx}_i \sim N_{p}(\mathbf{0}, \bSigma_x)$ and the mean response conditional on $\widetilde{\bx}_i$ is $\mu_i = \exp\{ \beta_0^0 + \widetilde{\bx}_i^T \bbeta^0 \}$ under the canonical link. Then, we have
	\begin{equation*}
	\bSigma_{\bxi^0} = 
	\exp \left\{ \beta_0^0 + \frac{1}{2}{\bbeta^0}^T \bSigma_x \bbeta^0 \right\} \left(
	\begin{array}{cc}
	1 & {\bbeta^0}^T \bSigma_x \\
	\bSigma_x \bbeta^0 & \bSigma_x + \bSigma_x \bbeta^0 {\bbeta^0}^T \bSigma_x \\
	\end{array}
	\right)
	\end{equation*}
	and 
	\begin{equation*} \label{eq:true_theta}
	\bTheta_{\bxi^0} = \exp \left\{ - \beta_0^0 - \frac{1}{2}{\bbeta^0}^T \bSigma_x \bbeta^0 \right\}
	\left(
	\begin{array}{cc}
	\displaystyle \frac{1}{c} & \displaystyle -\frac{1}{c} \ba^T \bA^{-1}  \\
	\displaystyle -\frac{1}{c}  \bA^{-1} \ba & \bA^{-1} + \displaystyle \frac{1}{c} \bA^{-1} \ba \ba^T \bA^{-1} \\
	\end{array}
	\right),
	\end{equation*}
	where $\bA = \bSigma_x + \bSigma_x \bbeta^0 {\bbeta^0}^T \bSigma_x$, $\ba = \bSigma_x \bbeta^0$ and $c = 1 - {\bbeta^0}^T (\bSigma_x^{-1} + \bbeta^0 {\bbeta^0}^T )^{-1} \bbeta^0$. In an over-simplified case where covariates are independent ($\bSigma_x = \bI_p$) and $\bbeta^0$ is  sparse, $\bA^{-1} + \displaystyle \frac{1}{c} \bA^{-1} \ba \ba^T \bA^{-1}$ can be a sparse matrix. However, with often complicated correlation structures between covariates, signal positions and strengths in $\bbeta^0$, it is difficult to guarantee that $\bA^{-1} + \displaystyle \frac{1}{c} \bA^{-1} \ba \ba^T \bA^{-1}$ is sparse.

	To summarize, we believe that the sparsity assumption imposed on $\bTheta_{\bxi^0}$ plays an extremely important role in obtaining the desirable asymptotic properties and finite sample performance of  de-biased lasso in GLMs. However, such an assumption is hardly justifiable in a GLM setting. As evidenced by our simulations, the gap between theory and practice  likely explains the problematic performance of  de-biased lasso  in the ``large $p$, small $n$" scenario. Also note that both bias terms $II_j$ and $III_j$ are not even computable and cannot be recovered, because they involve the unknown $\bxi^0$. All point to that  de-biased lasso generally does not  work well in GLMs in the ``large $p$, small $n$" scenario. 
	
	\section{The ``large $n$, diverging $p$" scenario}
	
	We next study de-biased lasso in GLMs when $p < n$ but $p$ diverges to infinity with $n$ by eliminating more biases, where, under certain conditions, the Hessian matrix is invertible with probability going to one. Therefore, directly inverting the Hessian matrix serves as a natural alternative to the node-wise lasso for $\widehat{\bTheta}$. In the following, we study the properties of this alternative estimator. Denote $\widetilde{\bTheta} = \widehat{\bSigma}_{\widehat{\bxi}}^{-1}$ to distinguish it from the node-wise lasso estimator $\widehat{\bTheta}$. Similarly, $\widetilde{\bTheta}_j$ represents the $j$th row of $\widetilde{\bTheta}$. 
	
	Similar to  (\ref{eq:derive_bhat}), we have the following equality using $\widetilde{\bTheta}$:
	\begin{equation} \label{eq:derive_bhat_inv}
	\left[
	\widehat{\bxi} + \left( - \widetilde{\bTheta} \mathbb{P}_n \dot{\bm{\rho}}_{\widehat{\bxi}}   \right) +  \left( -\widetilde{\bTheta} \bm{\Delta} \right) +  \left( \widetilde{\bTheta} \mathbb{P}_n \ddot{\bm{\rho}}_{\widehat{\bxi}} - \bI \right) \left( \widehat{\bxi} - \bxi^0 \right) \right] - \bxi^0 = - \widetilde{\bTheta} \mathbb{P}_n \dot{\bm{\rho}}_{{\bxi}^0}.
	\end{equation}
	With $\widetilde{\bTheta} = \widehat{\bSigma}_{\widehat{\bxi}}^{-1}$, the new term $III_j$ in (\ref{eq:derive_bhat_inv}) equals 0 for all $j$, which is no longer a source of bias compared to the original de-biased lasso. Then (\ref{eq:derive_bhat_inv}) becomes 
	\begin{equation} \label{eq:derive_bhat_inv2}
	\left[
	\widehat{\bxi} + \left( - \widetilde{\bTheta} \mathbb{P}_n \dot{\bm{\rho}}_{\widehat{\bxi}}   \right) +  \left( -\widetilde{\bTheta} \bm{\Delta} \right)  \right] - \bxi^0 = - \widetilde{\bTheta} \mathbb{P}_n \dot{\bm{\rho}}_{{\bxi}^0}.
	\end{equation}
	The new de-biased lasso estimator based on  $\widetilde{\bTheta}$ is 
	\[\widetilde{\bm{b}} \equiv \widehat{\bxi} - \widetilde{\bTheta} \mathbb{P}_n \dot{\bm{\rho}}_{\widehat{\bxi}},
	\]  
	which is designed to further correct biases compared to the original de-biased estimator. We will show that any linear combinations of $\widetilde{\bb}$, including each coefficient estimate as a special case, are asymptotically normally distributed. 
	
	\subsection{Theoretical results}
	
	Without loss of generality, we assume that each covariate has been  standardized to have mean zero and variance 1. Let $s_0$ denote the number of non-zero elements in $\bxi^0$. Let $\bX_{\bxi} = \bW_{\bxi}\bX$ be the weighted design matrix, where $\bW_{\bxi}$ is a diagonal matrix with elements $\omega_i(\bxi) = \sqrt{\ddot{\rho}(y_i, x_i^T \bxi)},~ i = 1,\cdots, n$. Recall that for any $\bxi \in \mathbb{R}^{p+1}$,  $\widehat{\bSigma}_{\bxi} = \bX_{\bxi}^T \bX_{\bxi}/n$ and ${\bSigma}_{\bxi} = \E(\widehat{\bSigma}_{\bxi})$. The $\psi_2$-norms (see \citealt{vershynin2010introduction}) introduced below are useful for characterizing the convergence rate of $\widehat{\bSigma}_{\widehat{\bxi}}^{-1}$. 
	For a random variable $Z$, its $\psi_2$-norm  is defined as 
	\[
	\| Z \|_{\psi_2} =  \sup_{r \ge 1} r^{-1/2} (\E|Z|^r)^{1/r}.
	\] 
	We call $Z$ a sub-Gaussian random variable if $\| Z \|_{\psi_2} \le M < \infty$ for a constant $M>0$. For a random vector $\bm{Z}$, its $\psi_2$-norm is defined as $$\|\bm{Z}\|_{\psi_2} =  \sup_{\|\bm{a}\|_2=1} \|\langle \bm{Z}, \bm{a} \rangle\|_{\psi_2}.$$ A random vector $\bm{Z} \in \mathbb{R}^{p+1}$ is called sub-Gaussian if the inner product $\langle\bm{Z}, \ba  \rangle$ is sub-Gaussian for all $\ba \in \mathbb{R}^{p+1}$.   Let $L_p = || \bSigma_{\bxi^0}^{-\frac{1}{2}} \bx_1 \omega_1(\bxi^0) ||_{\psi_2}$, which characterizes the probabilistic tail behavior of the weighted covariates.  We make the following assumptions.
	
	\begin{itemize} \setlength{\itemindent}{15pt}
		\item[(C1)] The elements in $\bX$ are bounded, i.e. there exists a constant $K>0$ such that $\|\bX\|_{\infty} \le K$. 
		\item[(C2)] $\bSigma_{\bxi^0}$ is positive definite and its eigenvalues are bounded and bounded away from $0$, i.e. there exist two absolute constants $c_{\mathrm{min}}$ and $c_{\mathrm{max}} $ such that $0 < c_{\mathrm{min}} \le \lambdamin(\bSigma_{\bxi^0}) \le \lambdamax(\bSigma_{\bxi^0}) \le c_{\mathrm{max}} < \infty$.  
		\item[(C3)] The derivatives $\dot{\rho}(y, a) \equiv \displaystyle \frac{\partial}{\partial a} \rho(y, a)$ and $\ddot{\rho}(y, a) = \displaystyle \frac{\partial^2}{\partial a^2} \rho(y, a)$ exist for all $(y,a)$. For some $\delta$-neighborhood ($\delta > 0$), $\ddot{\rho}(y,a)$ is Lipschitz such that for some absolute constant $c_{Lip} > 0$,
		\[
		\displaystyle \max_{a_0 \in \{\bx_i^T \bxi^0\}}   \sup_{|a-a_0| \vee |\widehat{a} - a_0| \le \delta}  \sup_{y \in \mathcal{Y}}  \displaystyle \frac{| \ddot{\rho}(y,a) -  \ddot{\rho}(y,\widehat{a})|}{|a-\widehat{a}|}	\le c_{Lip}.
		\]
		The derivatives are  bounded in the sense that there exist two constants $K_1, K_2 > 0$ such that
		\[
		\begin{array}{c}
		\displaystyle \max_{a_0 \in \{\bx_i^T \bxi^0\}} \sup_{y \in \mathcal{Y}} |\dot{\rho}(y, a_0)| \le K_1, \\
		\displaystyle \max_{a_0 \in \{\bx_i^T \bxi^0\}} \sup_{|a-a_0|\le \delta} \sup_{y \in \mathcal{Y}} |\ddot{\rho}(y,a)| \le K_2. 
		\end{array}
		\]
		\item[(C4)] $\| \bX \bxi^0 \|_{\infty}$ is bounded.
		\item[(C5)] The matrix $\E({\bX}^T \bX / n)$ is positive definite and its eigenvalues are bounded and bounded away from 0.
	\end{itemize}

	It is common to assume bounded covariates as in (C1) and bounded eigenvalues of the information matrix as in (C2) in high-dimensional inference literature \citep{van2014asymptotically,ning2017general}. $\bx_1, \cdots, \bx_n$ are  sub-Gaussian random vectors under (C1), but we do not impose a boundedness assumption on their $\psi_2$-norm, which may depend on $p$ \citep{vershynin2010introduction,vershynin2012close}. (C2) refers to a compatibility condition that is sufficient to derive the rate of convergence for $\widehat{\bxi}$. (C3) assumes local properties of the derivatives of the general loss $\rho(y, \bx^T \xi)$ \citep{van2014asymptotically}. (C4) is  commonly assumed \citep{van2014asymptotically,ning2017general} and ensures the quadratic margin behavior of the excess risk and is useful to obtain the rate for $\| \bX (\widehat{\bxi} - \bxi^0) \|_2^2 /n$ \citep{buhlmann2011statistics}. (C5) is a  mild requirement in high-dimensional regression analysis with random designs. A similar condition can be found in \citet{wang2011gee}.

	Theorem \ref{thm:main} establishes the asymptotic normality result for any linear combinations of  $\widetilde{\bb}$, based on which inference can be drawn. The proof is given in the Appendix, as well as useful lemmas.

	\medskip
	
	\begin{theorem} \label{thm:main}
		Assume that $\displaystyle L_p^4 {\frac{p^2 \log{p}}{n}} \rightarrow 0$, $\sqrt{p \log(p)}  s_0 \lambda \rightarrow 0$, and $\sqrt{np} s_0\lambda^2 \rightarrow 0$ as $n \rightarrow \infty$. Let  $\widetilde{\bb} = \widehat{\bxi} - \widetilde{\bTheta} \mathbb{P}_n \dot{\rho}_{\widehat{\bxi}}$ and  $\balpha_n \in \mathbb{R}^{p+1}$ with $||\balpha_n||_2 = 1$. Under  (C1) - (C5), we have
		\[
		\displaystyle \frac{\sqrt{n}\balpha_n^T(\widetilde{\bb} - \bxi^0)}{\sqrt{\balpha_n^T \widetilde{\bTheta} \balpha_n}} \overset{d}{\rightarrow} N(0,1).
		\]
	\end{theorem}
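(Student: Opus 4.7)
The plan is to start from the key equality (\ref{eq:derive_bhat_inv2}) and decompose
\[
\sqrt{n}\,\balpha_n^T(\widetilde{\bb}-\bxi^0)
\;=\;
\underbrace{-\sqrt{n}\,\balpha_n^T \bTheta_{\bxi^0}\,\Pn \dot{\bm{\rho}}_{\bxi^0}}_{\text{(I)}}
\;+\;
\underbrace{\sqrt{n}\,\balpha_n^T (\bTheta_{\bxi^0}-\widetilde{\bTheta})\Pn \dot{\bm{\rho}}_{\bxi^0}}_{\text{(II)}}
\;+\;
\underbrace{\sqrt{n}\,\balpha_n^T \widetilde{\bTheta}\,\bm{\Delta}}_{\text{(III)}}.
\]
The goal is to prove that (I) has a Gaussian limit with variance $\balpha_n^T\bTheta_{\bxi^0}\balpha_n$ while (II) and (III) are both $\oP(1)$, after which Slutsky replaces $\sqrt{\balpha_n^T\bTheta_{\bxi^0}\balpha_n}$ in the denominator by $\sqrt{\balpha_n^T\widetilde{\bTheta}\balpha_n}$. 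Term (I) is an i.i.d.\ sum whose summand variance equals $\balpha_n^T\bTheta_{\bxi^0}\bSigma_{\bxi^0}\bTheta_{\bxi^0}\balpha_n=\balpha_n^T\bTheta_{\bxi^0}\balpha_n$ via the Bartlett identity $\E[\dot{\rho}^2\bx\bx^T]=\bSigma_{\bxi^0}$; each summand is bounded in magnitude by $C\sqrt{p}$ using (C1), (C3), and $\|\bTheta_{\bxi^0}\balpha_n\|_2\le c_{\min}^{-1}$ from (C2), so under $p/n\to 0$ the Lindeberg condition is immediate and (I)$/\sqrt{\balpha_n^T\bTheta_{\bxi^0}\balpha_n}\overset{d}{\to}N(0,1)$.

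Term (II) is controlled through the matrix identity $\widetilde{\bTheta}-\bTheta_{\bxi^0}=\widetilde{\bTheta}(\bSigma_{\bxi^0}-\widehat{\bSigma}_{\widehat{\bxi}})\bTheta_{\bxi^0}$ together with Cauchy--Schwarz: $|\text{(II)}|\le \sqrt{n}\,\|\widetilde{\bTheta}-\bTheta_{\bxi^0}\|\cdot\|\Pn\dot{\bm{\rho}}_{\bxi^0}\|_2$, where a coordinate-wise moment bound gives $\|\Pn\dot{\bm{\rho}}_{\bxi^0}\|_2=\OP(\sqrt{p/n})$. The task thereby reduces to $\sqrt{p}\,\|\widetilde{\bTheta}-\bTheta_{\bxi^0}\|=\oP(1)$. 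Splitting $\widehat{\bSigma}_{\widehat{\bxi}}-\bSigma_{\bxi^0}$ into $(\widehat{\bSigma}_{\bxi^0}-\bSigma_{\bxi^0})+(\widehat{\bSigma}_{\widehat{\bxi}}-\widehat{\bSigma}_{\bxi^0})$, the first piece receives a Vershynin-type sub-Gaussian sample-covariance bound at spectral rate $L_p^2\sqrt{p\log p/n}$ (this is precisely where the assumption $L_p^4 p^2\log p/n\to 0$ enters), while the second uses the Lipschitz hypothesis in (C3) together with the standard lasso rates $\|\widehat{\bxi}-\bxi^0\|_1=\OP(s_0\lambda)$ and $\|\bX(\widehat{\bxi}-\bxi^0)\|_2^2/n=\OP(s_0\lambda^2)$, and is absorbed by $\sqrt{p\log p}\,s_0\lambda\to 0$. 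Combined with (C2), these bounds also keep $\widetilde{\bTheta}$ well-conditioned on a high-probability event and imply $\balpha_n^T\widetilde{\bTheta}\balpha_n/\balpha_n^T\bTheta_{\bxi^0}\balpha_n\overset{P}{\to}1$.

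The main obstacle is term (III), the GLM-specific Taylor remainder. Using the Lipschitz bound in (C3) and (C1), the entry $\Delta_j$ in (\ref{eq:taylor_remainder}) satisfies $|\Delta_j|\le C\cdot n^{-1}\sum_i|\bx_i^T(\widehat{\bxi}-\bxi^0)|^2$, yielding $\|\bm{\Delta}\|_{\infty}=\OP(s_0\lambda^2)$ via the prediction-error bound. Then $|\text{(III)}|\le\sqrt{n}\,\|\widetilde{\bTheta}\balpha_n\|_1\,\|\bm{\Delta}\|_{\infty}\le\sqrt{n(p+1)}\,\|\widetilde{\bTheta}\balpha_n\|_2\,\|\bm{\Delta}\|_{\infty}=\OP(\sqrt{np}\,s_0\lambda^2)=\oP(1)$ precisely under the last hypothesis. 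Collecting the three pieces and invoking Slutsky then delivers the stated convergence. The delicate aspect of the proof is that the three rate conditions in the theorem are each sharp for a different step: $L_p^4 p^2\log p/n\to 0$ for the sample-covariance concentration in (II), $\sqrt{p\log p}\,s_0\lambda\to 0$ for the lasso-induced perturbation of $\widehat{\bSigma}$ in (II), and $\sqrt{np}\,s_0\lambda^2\to 0$ for the quadratic GLM remainder in (III).
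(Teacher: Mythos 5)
Your proposal is correct and follows essentially the same route as the paper's proof: the identical three-term decomposition of $\sqrt{n}\,\balpha_n^T(\widetilde{\bb}-\bxi^0)$, with the score term handled by a Lindeberg--Feller argument using the Bartlett identity and the bound $\|\bTheta_{\bxi^0}\balpha_n\|_2 \le c_{\min}^{-1}$, the perturbation term controlled through $\|\widetilde{\bTheta}-\bTheta_{\bxi^0}\| = \OP\bigl(L_p^2\sqrt{p/n}+s_0\lambda\bigr)$ obtained from sub-Gaussian covariance concentration plus the Lipschitz condition (C3) and the lasso rates, the Taylor remainder bounded by $\OP(\sqrt{np}\,s_0\lambda^2)$ via $\|\bm{\Delta}\|_{\infty}=\OP(s_0\lambda^2)$, and Slutsky to replace the denominator. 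The only cosmetic deviations are that you bound $\|\Pn\dot{\bm{\rho}}_{\bxi^0}\|_2=\OP(\sqrt{p/n})$ by a direct second-moment computation where the paper instead uses $\sqrt{p}\,\|\Pn\dot{\bm{\rho}}_{\bxi^0}\|_{\infty}=\OP(\sqrt{p\log p/n})$ via a Hoeffding maximal inequality, and you attach the $\log p$ slack to the covariance concentration rate (Vershynin's bound is actually $L_p^2\sqrt{p/n}$, so your stated rate is a valid but loose upper bound); both bookkeeping choices land on exactly the theorem's assumptions and change nothing substantive.
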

	
	\medskip
	
	From Theorem \ref{thm:main}, one can construct $100 \times (1-r)$th confidence intervals for $\balpha_n^T \bxi^0$ as
	\[
	\left[  \balpha_n^T  \bxi^0 - z_{r/2} \sqrt{\balpha_n^T \widetilde{\bTheta} \balpha_n / n}  ,  \balpha_n^T  \bxi^0 + z_{r/2} \sqrt{\balpha_n^T \widetilde{\bTheta} \balpha_n / n}   \right],
	\]
	where $z_{r/2}$ is the upper $(r/2)$th quantile of the standard normal distribution.

	\begin{remark} \normalfont
		For the lasso approach, $\lambda \asymp \sqrt{\log(p)/n}$, we then only need $\displaystyle L_p^2 \sqrt{\frac{p^2 \log{p}}{n}} \rightarrow 0$ and $\sqrt{np} s_0\lambda^2 \rightarrow 0$ as $n \rightarrow \infty$, because $\sqrt{p \log(p)}  s_0 \lambda \rightarrow 0$ and $\sqrt{np} s_0\lambda^2 \rightarrow 0$ are equivalent. 
	\end{remark}
	
	\begin{remark}
		\normalfont
		Theorem 1 reveals that the required rate for $p$ relative to $n$ depends on the factor $L_p$ and can be further simplified. The dependence on  $L_p$ results from that the convergence rate of $\widetilde{\bTheta}$ is related to $L_p = \| \bSigma_{\bxi^0}^{-\frac{1}{2}}\bx_1 \omega_1(\bxi^0) \|_{\psi_2}$. In \citet{javanmard2014confidence} for linear models and \citet{ning2017general} for GLMs, $L_p$ is assumed to be a constant irrelevant to $p$. When covariates follow a  multivariate Gaussian distribution in a linear model,  $L_p = \mathcal{O}(1)$ holds, then it only requires that $\displaystyle \frac{p^2 \log{p}}{n} \rightarrow 0$. However, in general, $L_p$ may grow with $p$, and it can be shown that the utmost bound $L_p = \mathcal{O}(\sqrt{p})$.
		Specifically, by definition, $L_p = \| \bSigma_{\bxi^0}^{-\frac{1}{2}} \bx_1 \omega_1(\bxi^0) \|_{\psi_2} = \sup_{\bz \in B^{p+1}} \| \langle \bSigma_{\bxi^0}^{-\frac{1}{2}} \bx_1 w_1(\bxi^0), \bz \rangle \|_{\psi_2}$, where $B^{p+1}$ is the unit ball in $\mathbb{R}^{p+1}$. Then we have
		\begin{eqnarray*}
			| \langle \bSigma_{\bxi^0}^{-\frac{1}{2}} \bx_1 w_1(\bxi^0), \bz \rangle | 
			&\le & \|\bz\|_2 \cdot \| \bSigma_{\bxi^0}^{-\frac{1}{2}} \bx_1 w_1(\bxi^0) \|_2 \\
			&\le & \| \bSigma_{\bxi^0}^{-\frac{1}{2}} \| \cdot \| \bx_1 w_1(\bxi^0) \|_2 \\
			&\le & c_{\min}^{-\frac{1}{2}} \sqrt{K_2 ({p+1})}K.
		\end{eqnarray*}
		Therefore, $L_p \le c_{\min}^{-\frac{1}{2}} \sqrt{K_2 ({p+1})}K$. 
		This results in the most stringent rate requirement  
		$\displaystyle \frac{p^4 \log{p}}{n} \rightarrow 0$,  implying  $\sqrt{np} s_0 \lambda^2 = o(1)$ when $\lambda \asymp \sqrt{\log(p)/n}$.

	\end{remark}
	
	\begin{remark}
		\normalfont
		In Theorem 1, $p$ is assumed to grow slowly  with  $n$ so that $p \ll n$. 
		This assumption is not uncommon in the literature. \citet{fan2004nonconcave} assumed $p^5 / n \rightarrow 0$ for a non-concave penalized maximum likelihood estimator to establish the oracle property and the asymptotic normality for selected variables. Yet the estimates in \citet{fan2004nonconcave} are super-efficient, which is not our focus. Without parameter regularization, \citet{wang2011gee} assumed $p^3 / n \rightarrow 0$ to derive asymptotic normality for the solutions to generalized estimating equations with binary outcomes and clustered data, which reduces to the usual logistic regression when simplified to a singleton in each cluster. \citet{wang2011gee} studied a fixed design case, and proved the asymptotic normality for a different quantity $\balpha_n^T \overline{\mathbf{M}}_n(\bbeta_{n0})^{-1/2} \overline{\mathbf{H}}_n (\bbeta_{n0})(\widehat{\bbeta}_n - \bbeta_{n0})$; see  Theorem 3.8 in \citet{wang2011gee}. When $p/n$ is not negligible (e.g. $>0.1$), simulations show that  MLE yields biased and highly variable estimates, and is outperformed by our proposed $\widetilde{b}$. 
		
	\end{remark}

	\subsection{Simulation results}

	We investigate the performance of our alternative de-biased estimator $\widetilde{\bb}$ in the ``large $n$, diverging $p$" scenario, and focus on   biases in estimates and coverage probabilities of confidence intervals. The estimators in comparison are
	\begin{itemize} \setlength{\itemsep}{0pt}
		\item[(i)] the original de-biased lasso estimator $\widehat{b}_j$ obtained by using the node-wise lasso estimator $\widehat{\bTheta}$ in \citet{van2014asymptotically} (\textit{ORIG-DS});
		\item[(ii)] the refined de-biased lasso approach based on the inverse matrix estimation  $\widetilde{\bTheta} = \widehat{\bSigma}_{\widehat{\bxi}}^{-1}$, $\widetilde{b}_j$, as described in this section (\textit{REF-DS});
		\item[(iii)] the conventional MLE (\textit{MLE}).
	\end{itemize}

	As simulations using  logistic and Poisson regression models yield  similar results, we only report those from logistic regression.  A total of $n=1,000$ observations and $p = 40, 100, 300, 400$ covariates are simulated. We assume that in $\bx_i = (1, \widetilde{\bx}_i^T)^T$, $\widetilde{\bx}_i$ are independently generated from $N_{p} (\mathbf{0}_{p}, \bSigma_x)$ then truncated at $\pm 6$, and $y_i | \bx_i \sim Bernoulli(\mu_i)$, where $\mu_i \equiv \exp(\bx_i^T \bxi^0)/\{ 1+ \exp(\bx_i^T \bxi^0) \}$. The intercept $\beta_0^0 = 0$, and  $\beta_1^0$  varies from 0 to 1.5 with 40 equally spaced increments. Four additional arbitrarily chosen elements of $\bbeta^0$ take non-zero values, two at $0.5$ and the other two at 1, and then are fixed throughout the simulation. In some settings, \textit{MLE} estimates do not exist due to divergence and thus are not shown. The covariance matrix $\bSigma_x$ for $\widetilde{\bx}_i$ takes one of the following three forms: identity matrix, AR(1) with correlation $\rho = 0.7$, and compound symmetry with correlation $\rho=0.7$. The tuning parameter in the $\ell_1$-norm penalized regression is selected by 10-fold cross-validation, and the tuning parameter for the node-wise lasso estimator $\widehat{\bTheta}$ is selected using 5-fold cross-validation. Both tuning parameter selection procedures are implemented using  \texttt{glmnet} \citep{friedman2010regularization}. For every $\beta_1^0$ value, we summarize the average bias, empirical coverage probability, empirical standard error and model-based estimated standard error over 200 replications.

	Figure \ref{fig:logit_n1k_ar1} presents the simulation results for estimating $\beta_1^0$ under the AR(1) covariance structure, whereas the similar simulation results under the other two covariance structures are provided in the online supplementary material. The three methods in comparison behave similarly when only 40 covariates are present, with \textit{MLE} showing slightly larger biases for larger signals. \textit{MLE} displays much more biases than the other two methods when 100 covariates are present, and does not always exist in some settings as the number of covariates increases. When  \textit{MLE} does exist, it shows more variability than \textit{ORIG-DS} and \textit{REF-DS}, and  lower coverage probabilities. There is a systematic bias in \textit{ORIG-DS}, which increases with the signal strength of $\beta_1^0$. For large signals, the model-based standard error of \textit{ORIG-DS} slightly underestimates the true variability. These factors contribute to the poor coverage probabilities of \textit{ORIG-DS} when signal size is not too close to zero. Among all the competing methods, \textit{REF-DS} presents the least biases and has an empirical coverage probability closest to the nominal level across different settings, though \textit{REF-DS} exhibits slightly higher variability than \textit{ORIG-DS}.
	This is possibly because \textit{REF-DS} does not utilize penalization when inverting the matrix. Under the null $\beta_1^0 = 0$, both \textit{ORIG-DS} and \textit{REF-DS} have coverage probabilities close to 95\% and preserve the type 1 error.

	\subsection{Boston Lung Cancer Study (BLCS)}
	
	Lung cancer is the leading cause of cancer death in the United States. BLCS, a large epidemiology cohort for investigating the molecular cause of lung cancer,  includes over 11,000 lung cancer cases  enrolled at Massachusetts General Hospital and the Dana-Farber Cancer Institute from 1992 to present (see  \url{https://maps.cancer.gov/overview/DCCPSGrants/abstract.jsp?applId=9320074&term=CA209414}). We applied \textit{REF-DS}, together with \textit{ORIG-DS} and \textit{MLE}, to a subset of the BLCS data and simultaneously examined the joint effects of SNPs in nine target genes on the overall risk of lung cancer. 
	
	Genotypes from Axiom array and clinical information were originally collected on 1,459 individuals. Out of the 1,459 individuals, 14 (0.96\%) had missing smoking status, 8 (0.55\%) had missing race information, and 1,386 (95\%) were Caucasian. We included a final number of $n=1,374$ Caucasians, where $n_0=723$ were controls and $n_1=651$ were cases, with known lung cancer status (``1" for cases and ``0" for controls) and smoking status (``1" for ever smoker and ``0" for never). Among the 1,077 smokers, 595 had lung cancer, and the number of cases was 56 out of the 297 non-smokers. Other demographic characteristics of the study population, including education level (no high school, high school graduate, or at least 1-2 years of college), gender and age, are summarized in the online supplementary material. Using the target gene approach, we focused on the following genes: \textit{AK5} on region 1p31.1, \textit{RNASET2} on region 6q27,  \textit{CHRNA2} and \textit{EPHX2} on region 8p21.2, \textit{BRCA2} on region 13q13.1,  \textit{SEMA6D} and \textit{SECISBP2L} on region 15q21.1,  \textit{CHRNA5} on region 15q25.1, and \textit{CYP2A6} on region 19q13.2. These genes have been reported in  \citet{mckay2017large} to harbor SNPs  associated  with the overall lung cancer risks. In our dataset, each SNP was coded as 0,1,2, reflecting the number of copies of the minor allele, and was assumed  to have ``additive effects". After applying filters on the minor allele frequency,  genotype call rate (percentage of missingness), and excluding SNPs that were highly correlated, 103 SNPs remained. Details on  data processing can be found in the online supplementary material.

	The final analyzable dataset consisted of 1,374 individuals, 103 SNPs, and demographic information including education history, age and gender. Since existing studies suggest smoking can modify   associations  between lung cancer risks and SNPs, for example, those residing in region 15q25.1  \citep{gabrielsen2013association,amos2008genome}, we conducted analysis stratified by smoking status. Within the smoker and non-smoker groups, we fitted separate logistic regression models, adjusting for educational history, gender and age (centered at the mean). In total, there were 107 variables for stratified analysis among 1,077 smokers and 297 non-smokers. As  a reference, we conducted  marginal analysis, which examined one SNP at a time while adjusting for demographic information. Marginal and joint analyses have distinct interpretations and can generate different estimates.

	We applied  these methods to  draw inference on all of the 107 predictors, and comparisons of the results of the BLCS data analysis may shed light on the molecular mechanism underlying lung cancer. For ease of presentation, Table \ref{tab:blcs_strat1} lists the regression coefficient estimates, model-based estimated standard errors and  95\% confidence intervals (CIs) for demographic variables and 11 SNPs in the stratified analysis for an illustration. Some of these SNPs had at least one 95\% CI (calculated by the three methods)  that excluded 0 among either the smokers or the non-smokers; others showed differences among the estimating methods. Details of the remaining SNPs were omitted due to the space limitation.
	Since the number of the non-smokers was only about one third of the smokers, the \textit{MLE} estimates had the largest standard errors and tended to break down among the non-smokers (see, for example, AX-62479186 in Table \ref{tab:blcs_strat1}\subref{tab:blcs_nonsmo}), whereas the two de-biased lasso methods gave more reasonable estimates. The estimates by \textit{REF-DS} and \textit{ORIG-DS} shared more similarity in the smokers (Table \ref{tab:blcs_strat1}\subref{tab:blcs_smo}) than in the non-smokers (Table \ref{tab:blcs_strat1}\subref{tab:blcs_nonsmo}).  Overall, \textit{ORIG-DS} had slightly narrower confidence intervals than \textit{REF-DS}, probably due to penalized estimation for $\widehat{\bTheta}$.   These results generally agreed with our  simulation results. 
	
	Additional differences between \textit{ORIG-DS} and \textit{REF-DS} lied in opposite directions obtained for the estimated effects of some SNPs, such as  AX-38419741 and AX-15934253 in Table \ref{tab:blcs_strat1}\subref{tab:blcs_smo}, and AX-42391645 in Table \ref{tab:blcs_strat1}\subref{tab:blcs_nonsmo}. Among the non-smokers, the 95\% CI for AX-31620127 in \textit{SEMA6D} by \textit{REF-DS} was all positive and excluded 0, while the CI by \textit{ORIG-DS} included 0; the story for AX-88907114 in \textit{CYP2A6} was just opposite (Table \ref{tab:blcs_strat1}\subref{tab:blcs_nonsmo}).

	\textit{CHRNA5} is a gene  known  for predisposition to nicotine dependence \citep{hallden2016gene,hung2008susceptibility,amos2008genome,thorgeirsson2008variant,gabrielsen2013association}. Though AX-39952685 and  AX-88891100 in \textit{CHRNA5} were not significant at level 0.05 in marginal analysis among the smokers, their 95\% CIs in Table \ref{tab:blcs_strat1}\subref{tab:blcs_smo} excluded 0 by all of the three methods.  Indeed AX-88891100, or rs503464 mapped to the same physical location in dbSNP (\url{https://www.ncbi.nlm.nih.gov/snp/}), was found to ``decrease \textit{CHRNA5} promoter-derived luciferase activity" \citep{doyle2011vitro}. The same SNP was also reported to be significantly associated with nicotine dependence at baseline, as well as response to varenicline, bupropion, nicotine replacement therapy for smoking cessation \citep{pintarelli2017pharmacogenetic}.  AX-39952685 was found to be strongly correlated with SNP AX-39952697 in \textit{CHRNA5}, which was mapped to the same physical location as rs11633585 in dbSNP. All of these markers  were found to be significantly associated with nicotine dependence \citep{stevens2008nicotinic}. The stratified analysis also suggested molecular mechanisms of lung cancer differ between smokers and non-smokers, but affirmative conclusions need additional confirmatory studies. In summary, jointly modeling the genetic effects on lung cancer risks can help understand underlying mechanisms and personalized therapies, which necessitates the use of reliable inference tools.

	\section{Discussion} 
	
	Our work has produced several intriguing results that can be impactful in both theory and practical implementation. From extensive simulations we have discovered the unsatisfactory performance of de-biased lasso in drawing inference with high-dimensional GLMs. We have further
	pinpointed an essential assumption that hardly holds for GLMs in general, i.e. the sparsity of the high-dimensional inverse information matrix $\bTheta_{\bxi^0}$ \citep{van2014asymptotically}, making  de-biased lasso fail to deliver reliable inference in practice. This type of $\ell_0$ sparsity conditions on matrices is not uncommon in the literature of high-dimensional inference. A related $\ell_0$ sparsity condition on $\mathbf{w}^* = {\bI^{* -1}}_{\mathbf{\gamma}\mathbf{\gamma}} \bI^*_{\mathbf{\gamma} \theta}$ can be found in \citet{ning2017general}, where $\bI^*$ is the information matrix under the truth, but is not well justified in a general GLM setting. When testing a global null hypothesis ($\bbeta^0= \mathbf{0}$), however, the sparsity of $\bTheta_{\bxi^0}$ reduces to the sparsity of the covariate precision matrix, which becomes less of an issue (see \citealt{cai2019sparse}).

	Our detailed work leads to practical guidelines as to how to use de-biased lasso for proper statistical inference with high-dimensional GLMs.  Our work summarily suggests that, when $p>n$,  de-biased lasso may not be applicable in general; when $p<n$ with diverging $p$, it is preferred to use the refined de-biased lasso, which directly inverts the Hessian matrix and provides improved confidence interval coverage probabilities for a wide range of $p$; when $p$ is rather small relative to $n$ (often viewed as a fixed $p$ problem), the refined de-biased lasso yields results nearly identical to MLE and the original de-biased lasso.


	\clearpage
	\appendix
	
	\begin{center}
		\Large \textbf{Appendix: Lemmas and Proofs}
	\end{center}

	We provide three lemmas that are useful for proving Theorem 1. 
	Without loss of generality, we denote the dimension of the parameter $\bxi$ by $p$ instead of $(p+1)$ to simplify the notation in the  proofs. Consequently, the matrices such as $\bSigma_{\bxi}$ and $\bTheta_{\bxi}$ are considered as $p\times p$ matrices. The simplification of notation does not affect  derivations.

	\begin{lemma}
		Under  (C1) - (C4), we have $\| \widehat{\bxi} - \bxi^0 \|_1 = \OP(s_0\lambda)$ and $\| \bX (\widehat{\bxi} - \bxi^0) \|_2^2 /n = \OP(s_0 \lambda^2)$.
	\end{lemma}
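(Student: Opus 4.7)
The plan is to follow the standard lasso oracle-inequality machinery for GLMs, tracing the arguments of \citet{van2008high} and Chapter 6 of \citet{buhlmann2011statistics}, and to verify that Conditions (C1)--(C4) supply the three familiar ingredients: a tuning parameter $\lambda$ dominating the empirical score, a compatibility condition on the Hessian, and a quadratic margin for the excess risk.

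First I would write down the deterministic basic inequality coming from the definition of $\widehat{\bxi}$:
\[
\Pn \rho_{\widehat{\bxi}} - \Pn \rho_{\bxi^0} + \lambda\|\widehat{\bbeta}\|_1 \le \lambda \|\bbeta^0\|_1,
\]
and split the left side into the nonnegative excess risk $P(\rho_{\widehat{\bxi}}-\rho_{\bxi^0})$ plus an empirical-process remainder. Using (C1) and (C3), each coordinate of $\dot{\bm{\rho}}_{\bxi^0}$ evaluated at $(y_i,\bx_i)$ is uniformly bounded by $KK_1$, so a Hoeffding-type union bound over the $p+1$ coordinates yields $\|\Pn \dot{\bm{\rho}}_{\bxi^0}\|_\infty = \OP(\sqrt{\log p / n})$; taking $\lambda$ of the order $\sqrt{\log p/n}$ (up to the usual factor $2$ for absorbing constants) makes $\lambda$ dominate this noise term on a high-probability event $\mathcal{T}$.

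Next I would derive a quadratic margin. Conditions (C3) and (C4) guarantee that $\ddot{\rho}$ is Lipschitz and bounded below (via the strong convexity of $b$) on the $\delta$-neighborhood of $\{\bx_i^T\bxi^0\}$, so on that neighborhood $P(\rho_{\bxi}-\rho_{\bxi^0}) \gtrsim (\bxi-\bxi^0)^T\bSigma_{\bxi^0}(\bxi-\bxi^0)$. Condition (C2) upgrades this to a lower bound in terms of $\|\bxi-\bxi^0\|_2^2$, and boundedness of $\bX$ (via (C1)) lets me further bound it by a multiple of $\|\bX(\bxi-\bxi^0)\|_2^2/n$ plus a negligible concentration error. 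Combining the margin inequality with the basic inequality, applying the standard cone decomposition $\|\widehat{\bbeta}\|_1 - \|\bbeta^0\|_1 = \|\widehat{\bbeta}_{S_0^c}\|_1 - (\|\bbeta^0_{S_0}\|_1 - \|\widehat{\bbeta}_{S_0}\|_1)$ with $S_0$ the support of $\bbeta^0$, and using the compatibility condition to trade $\|\widehat{\bxi}-\bxi^0\|_1$ against $\|\bX(\widehat{\bxi}-\bxi^0)\|_2/\sqrt{n}$ produces the two claimed rates $\|\widehat{\bxi}-\bxi^0\|_1 = \OP(s_0\lambda)$ and $\|\bX(\widehat{\bxi}-\bxi^0)\|_2^2/n = \OP(s_0\lambda^2)$.

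The main obstacle is the ``locality'' of the quadratic margin: the lower bound on the excess risk is valid only when $\bx_i^T\widehat{\bxi}$ lies in the $\delta$-neighborhood of $\bx_i^T\bxi^0$ required by (C3). I would handle this with the usual two-stage bootstrapping argument for penalized GLMs: first show that on $\mathcal{T}$ the penalized objective's convexity forces $\widehat{\bxi}$ into a slightly enlarged $\ell_1$-ball where (C1) ensures $\|\bX(\widehat{\bxi}-\bxi^0)\|_\infty \le \delta$, and then re-apply the margin bound in that neighborhood to obtain the sharp rate. All remaining steps are routine algebra.
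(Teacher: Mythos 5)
Your proposal is correct and takes essentially the same route as the paper: the paper proves the $\ell_1$ bound by invoking exactly the machinery you reconstruct---(C2) plus Lemma 6.23 of \citet{buhlmann2011statistics} for the population compatibility condition, a Hoeffding bound giving $\|\widehat{\bSigma}_{\bxi^0}-\bSigma_{\bxi^0}\|_\infty = \OP(\sqrt{\log p/n})$ together with Lemma 6.17 to transfer compatibility to the empirical Hessian, and then Theorem 6.4 of \citet{buhlmann2011statistics}, whose proof is precisely your basic-inequality, margin, cone-decomposition and locality/convexity argument---and obtains the prediction bound by citing \citet{ning2017general} for $(\widehat{\bxi}-\bxi^0)^T\widehat{\bSigma}_{\bxi^0}(\widehat{\bxi}-\bxi^0)=\OP(s_0\lambda^2)$ together with (C4), rather than re-deriving it from the margin inequality as you do. The only difference is one of presentation (you unpack what the paper outsources to citations), so no substantive gap remains, though your parenthetical appeal to ``strong convexity of $b$'' should more carefully be phrased as $\ddot b$ being bounded away from zero on the compact range of linear predictors guaranteed by (C4) and the $\delta$-neighborhood in (C3), which is also how the paper justifies the quadratic margin.
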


	\begin{proof}[\textbf{Proof}]
		
		Because $\lambda_{\mathrm{min}}(\bSigma_{\bxi^0}) > 0$ in (C2), the compatibility condition holds for all index sets $S \subset \{ 1, \cdots, p\}$ by Lemma 6.23 \citep{buhlmann2011statistics} and the fact that the adaptive restricted eigenvalue condition implies the compatibility condition. Exploiting  Hoeffding's concentration inequality, we have $\| \widehat{\bSigma}_{\bxi^0} - \bSigma_{\bxi^0} \|_{\infty} = \OP(\sqrt{log(p)/n})$. Then by Lemma 6.17 of \citet{buhlmann2011statistics}, we have the $\widehat{\bSigma}_{\bxi^0}$-compatibility condition. Finally, the first part of Lemma 1 follows from Theorem 6.4 in \citet{buhlmann2011statistics}.  
		
		For the second claim, \citet{ning2017general} showed that $(\widehat{\bxi} - \bxi^0)^T \widehat{\bSigma}_{\bxi^0} (\widehat{\bxi} - \bxi^0)^T = \OP(s_0 \lambda^2)$, then under (C4), we obtain the desired result. 
	\end{proof}

	\begin{lemma}
		Under (C1) - (C5), if we further assume that $s_0 \lambda \rightarrow 0$ and  $L_p^2 \sqrt{\displaystyle \frac{p}{n}} \rightarrow 0$, then $\widetilde{\bTheta}$ converges with the following rate
		\[
		|| \widetilde{\bTheta} - \bTheta_{\bxi^0} ||= \OP \left( L_p^2 \sqrt{\displaystyle \frac{p}{n}} + s_0 \lambda  \right).
		\]
	\end{lemma}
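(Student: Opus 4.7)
The plan is to use the standard matrix inverse difference identity
\[
\widetilde{\bTheta} - \bTheta_{\bxi^0} = -\bTheta_{\bxi^0}\,\bigl(\widehat{\bSigma}_{\widehat{\bxi}} - \bSigma_{\bxi^0}\bigr)\,\widetilde{\bTheta},
\]
and then to control $\|\widehat{\bSigma}_{\widehat{\bxi}} - \bSigma_{\bxi^0}\|$ in spectral norm by splitting it into a stochastic part $\widehat{\bSigma}_{\bxi^0} - \bSigma_{\bxi^0}$ and a plug-in part $\widehat{\bSigma}_{\widehat{\bxi}} - \widehat{\bSigma}_{\bxi^0}$. Since (C2) gives $\|\bTheta_{\bxi^0}\|\le c_{\min}^{-1}$, once both pieces are shown to be $o_{\mathbb{P}}(1)$ the spectral norm of $\widetilde{\bTheta}$ will also be $\OP(1)$ (because $\lambdamin(\widehat{\bSigma}_{\widehat{\bxi}})$ stays bounded away from $0$ on the event that $\|\widehat{\bSigma}_{\widehat{\bxi}} - \bSigma_{\bxi^0}\|$ is small), and the identity above will yield the rate.

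For the stochastic term, I would write $\widehat{\bSigma}_{\bxi^0} = \bX_{\bxi^0}^T\bX_{\bxi^0}/n$ and apply a sub-Gaussian sample-covariance bound (e.g.\ Vershynin, Theorems 5.39/5.44) to the whitened weighted rows $\bSigma_{\bxi^0}^{-1/2}\bx_i\omega_i(\bxi^0)$, whose $\psi_2$-norm is by definition $L_p$. This yields $\|\bSigma_{\bxi^0}^{-1/2}\widehat{\bSigma}_{\bxi^0}\bSigma_{\bxi^0}^{-1/2}-\bI\| = \OP(L_p^2\sqrt{p/n})$ under the assumption $L_p^2\sqrt{p/n}\to 0$, and multiplying by $\|\bSigma_{\bxi^0}\|\le c_{\max}$ delivers
\[
\|\widehat{\bSigma}_{\bxi^0} - \bSigma_{\bxi^0}\| = \OP\!\left(L_p^2\sqrt{p/n}\right).
\]

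For the plug-in term, the key trick is to avoid paying a dimension factor by working with a quadratic form in an arbitrary unit vector $\bu$:
\[
\bu^T(\widehat{\bSigma}_{\widehat{\bxi}} - \widehat{\bSigma}_{\bxi^0})\bu = \frac{1}{n}\sum_{i=1}^n (\bu^T\bx_i)^2\bigl[\ddot{\rho}(y_i,\bx_i^T\widehat{\bxi})-\ddot{\rho}(y_i,\bx_i^T\bxi^0)\bigr].
\]
The Lipschitz property in (C3) together with $\|\bx_i\|_\infty\le K$ from (C1) gives $|\ddot{\rho}(y_i,\bx_i^T\widehat{\bxi})-\ddot{\rho}(y_i,\bx_i^T\bxi^0)|\le c_{Lip}K\|\widehat{\bxi}-\bxi^0\|_1$ once $\|\widehat{\bxi}-\bxi^0\|_1\le \delta/K$, which holds with probability tending to one by Lemma 1 and the assumption $s_0\lambda\to 0$. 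Taking $\sup_{\|\bu\|_2=1}$ and bounding the remaining average by $\|\bX^T\bX/n\|$, which is $\OP(1)$ under (C5) (via another sub-Gaussian concentration together with a standard eigenvalue perturbation argument), yields
\[
\|\widehat{\bSigma}_{\widehat{\bxi}} - \widehat{\bSigma}_{\bxi^0}\| = \OP(s_0\lambda).
\]

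Combining the two pieces via the triangle inequality gives $\|\widehat{\bSigma}_{\widehat{\bxi}} - \bSigma_{\bxi^0}\| = \OP(L_p^2\sqrt{p/n}+s_0\lambda)$; then a Weyl-type argument shows $\|\widetilde{\bTheta}\|\le 2\|\bTheta_{\bxi^0}\|$ with probability tending to one, and the inverse-difference identity closes the proof. The main obstacle I expect is the plug-in piece: a naive entrywise bound combined with $\|\bA\|\le p\|\bA\|_{\infty}$ would lose an unacceptable factor of $p$, so the quadratic-form argument above (valid only once $\|\widehat{\bxi}-\bxi^0\|_1$ lands inside the $\delta$-neighborhood of (C3)) is essential; a secondary subtlety is verifying that $L_p$ is the correct sub-Gaussian parameter to invoke in the Vershynin-type bound, which requires only boundedness of $\|\bSigma_{\bxi^0}\|$ from (C2) rather than any extra sparsity on $\bTheta_{\bxi^0}$.
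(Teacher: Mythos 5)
Your proof is correct and follows essentially the same route as the paper's: the same inverse-difference identity, the same split into $\widehat{\bSigma}_{\bxi^0}-\bSigma_{\bxi^0}$ (controlled via Vershynin's sub-Gaussian sample-covariance bound with parameter $L_p$) and $\widehat{\bSigma}_{\widehat{\bxi}}-\widehat{\bSigma}_{\bxi^0}$ (controlled via the Lipschitz condition, $\|\widehat{\bxi}-\bxi^0\|_1=\OP(s_0\lambda)$, and $\lambdamax(\bX^T\bX/n)=\OP(1)$ from (C5)), plus the same eigenvalue-perturbation step to show $\|\widetilde{\bTheta}\|=\OP(1)$. Your quadratic-form treatment of the plug-in term is the paper's bound $\|\bX^T(\bW^2_{\widehat{\bxi}}-\bW^2_{\bxi^0})\bX/n\|\le\lambdamax(\bX^T\bX/n)\cdot\|\bW^2_{\widehat{\bxi}}-\bW^2_{\bxi^0}\|$ in disguise, and your explicit verification that $\widehat{\bxi}$ lands in the $\delta$-neighborhood of (C3) before invoking the Lipschitz constant is a small point the paper leaves implicit.
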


	\begin{proof}[\textbf{Proof}]
		Since $\widehat{\bSigma}_{\widehat{\bxi}}^{-1} - \bSigma_{\bxi^0}^{-1} = \widehat{\bSigma}_{\widehat{\bxi}}^{-1} \left(\bSigma_{\bxi^0} -  \widehat{\bSigma}_{\widehat{\bxi}}  \right) \bSigma_{\bxi^0}^{-1}$, we have 
		\begin{equation} \label{eq:inv}
		\| \widehat{\bSigma}_{\widehat{\bxi}}^{-1} - \bSigma_{\bxi^0}^{-1} \| \le  \| \widehat{\bSigma}_{\widehat{\bxi}}^{-1}  \| \cdot \| \widehat{\bSigma}_{\widehat{\bxi}}  - \bSigma_{\bxi^0}  \|  \cdot \| \bSigma_{\bxi^0}^{-1} \|.
		\end{equation}
		By (C2), $\| \bSigma_{\bxi^0}^{-1} \|$ is bounded. We obtain the convergence rate of $\| \widehat{\bSigma}_{\widehat{\bxi}}^{-1} - \bSigma_{\bxi^0}^{-1} \|$ by calculating the rate of $\| \widehat{\bSigma}_{\widehat{\bxi}} - \bSigma_{\bxi^0}  \|$ and showing that $\| \widehat{\bSigma}_{\widehat{\bxi}}^{-1}  \| $ is bounded with probability going to 1.
		
		Note that $\| \widehat{\bSigma}_{\widehat{\bxi}} - \bSigma_{\bxi^0}  \| \le \| \widehat{\bSigma}_{\widehat{\bxi}} - \widehat{\bSigma}_{\bxi^0} \| + \| \widehat{\bSigma}_{\bxi^0} -  \bSigma_{\bxi^0}  \|$. When the rows of $\bX$ are sub-Gaussian, so are the rows of $\bX_{\bxi^0}$ due to the boundedness of the weights $w_i$ in (C3). First, for $\| \widehat{\bSigma}_{\bxi^0} -  \bSigma_{\bxi^0}  \|$, \citet{vershynin2010introduction} shows that for every $t > 0$, it holds with probability at least  $1 - 2\exp(-c_L^{\prime} t^2)$ that
		\begin{equation}
		\| \widehat{\bSigma}_{\bxi^0} -  \bSigma_{\bxi^0} \| \le \| \bSigma_{\bxi^0}  \| \max(\delta, \delta^2) \le c_{\mathrm{max}} \max(\delta, \delta^2),
		\end{equation}
		where $\delta = C_L \sqrt{\displaystyle \frac{p}{n}} + \displaystyle \frac{t}{\sqrt{n}}$. Here $C_L$, $c_L^{\prime} > 0$ depend only on  $L_p = \| \bSigma_{\bxi^0}^{-\frac{1}{2}} \bx_1 \omega_1(\bxi^0) \|_{\psi_2}$. In fact $c_L^{\prime} = c_1/L_p^4$ and $C_L = L_p^2 \sqrt{\log 9 / c_1}$, where $c_1$ is an absolute constant. For $s > 0$ and $t = s C_L \sqrt{p}$, the probability becomes $1 - 2\exp(-c_2 s^2 p)$, $c_2 > 0$ being some absolute constant, and $\delta = (s+1) C_L \displaystyle \sqrt{\frac{p}{n}}$. Thus $\| \widehat{\bSigma}_{\bxi^0} -  \bSigma_{\bxi^0} \| = \mathcal{O}_p \left( \displaystyle  L_p^2 \sqrt{\frac{p}{n}} \right)$. 
		
		Note that 
		\[
		\begin{array}{rcl}
		\| \widehat{\bSigma}_{\widehat{\bxi}} - \widehat{\bSigma}_{\bxi^0} \| & = & \| {\bX}^T (\bW^2_{\widehat{\bxi}} - \bW^2_{\bxi^0}) \bX / n \|  \\ 
		& \le & \| {\bX}^T \| \cdot \| \bX \| / n \cdot \| \bW^2_{\widehat{\bxi}} - \bW^2_{\bxi^0} \|  \\
		& = & \lambdamax({\bX}^T \bX /n) \cdot  \| \bW^2_{\widehat{\bxi}} - \bW^2_{\bxi^0} \|.
		\end{array}
		\]
		By (C1) and  (C3), \begin{equation}
		\begin{array}{rcl}
		\| \bW^2_{\widehat{\bxi}} - \bW^2_{\bxi^0} \| & = & \max_i | \ddot{\rho}(y_i , \bx_i^T \widehat{\bxi}) - \ddot{\rho}(y_i , \bx_i^T \bxi^0)  | \\
		& \le &  c_{Lip}  \cdot \max_i | \bx_i^T (\widehat{\bxi} - \bxi^0)| \\
		& \le &  c_{Lip} K  \cdot \| \widehat{\bxi} - \bxi^0 \|_1.
		\end{array}
		\end{equation}
		By Lemma 1, we have $\| \widehat{\bxi} - \bxi^0 \|_1 = \OP(s_0 \lambda)$. In this case, $\| \bW^2_{\widehat{\bxi}} - \bW^2_{\bxi^0} \| = \OP(s_0 \lambda)$. By (C5) and \citet{vershynin2010introduction}, $\lambdamax({\bX}^T \bX /n) = \OP(1)$. Thus $\| \widehat{\bSigma}_{\widehat{\bxi}} - \widehat{\bSigma}_{\bxi^0} \| = \OP( s_0 \lambda)$. 
		Therefore, after combining the two parts, we have $ \| \widehat{\bSigma}_{\widehat{\bxi}} - \bSigma_{\bxi^0} \| = \OP \left( L_p^2 \sqrt{\displaystyle \frac{p}{n}} +  s_0 \lambda \right)$. 
		Under $L_p^2 \displaystyle \sqrt{\frac{p}{n}} = o(1)$ and $s_0 \lambda = o(1)$, we have $ \| \widehat{\bSigma}_{\widehat{\bxi}} - \bSigma_{\bxi^0} \| = \oP(1)$.
		
		Now for any vector $\bx$ with $\|\bx\|_2 = 1$, we have
		\[
		\displaystyle \inf_{\|\bm{y}\|_2 = 1} \| \widehat{\bSigma}_{\widehat{\bxi}} \bm{y}  \|_2 \le \| \widehat{\bSigma}_{\widehat{\bxi}} \bx \|_2 \le \| \bSigma_{\bxi^0} \bx \|_2 + \| (\widehat{\bSigma}_{\widehat{\bxi}} - \bSigma_{\bxi^0} ) \bx \|_2 \le \| \bSigma_{\bxi^0} \bx \|_2 + \displaystyle \sup_{\|\bm{z}\|_2 = 1} \| (\widehat{\bSigma}_{\widehat{\bxi}} - \bSigma_{\bxi^0} ) \bm{z} \|_2,
		\]
		which indicates that $\lambdamin(\widehat{\bSigma}_{\widehat{\bxi}}) \le \lambdamin(\bSigma_{\bxi^0}) + \| \widehat{\bSigma}_{\widehat{\bxi}} - \bSigma_{\bxi^0} \|$. Similarly, we have $\lambdamin(\bSigma_{\bxi^0})  \le  \lambdamin(\widehat{\bSigma}_{\widehat{\bxi}}) + \| \widehat{\bSigma}_{\widehat{\bxi}} - \bSigma_{\bxi^0} \|$. So $ | \lambdamin(\bSigma_{\bxi^0}) - \lambdamin(\widehat{\bSigma}_{\widehat{\bxi}}) | \le \| \widehat{\bSigma}_{\widehat{\bxi}} - \bSigma_{\bxi^0} \|$. For any $0 < \epsilon < \min \{ \| \bSigma_{\bxi^0} \|, \lambdamin( \bSigma_{\bxi^0} )/2 \}$, we have that
		\begin{eqnarray*}
			P \left( \| \widehat{\bSigma}_{\widehat{\bxi}}^{-1} \| \ge \displaystyle \frac{1}{\lambdamin(\bSigma_{\bxi^0}) - \epsilon} \right) 
			&= & P (  \lambdamin(\widehat{\bSigma}_{\widehat{\bxi}}) \le \lambdamin(\bSigma_{\bxi^0}) - \epsilon  )  \\
			&\le & P ( | \lambdamin(\widehat{\bSigma}_{\widehat{\bxi}}) - \lambdamin(\bSigma_{\bxi^0})  | \ge \epsilon ) \\
			&\le & P ( \| \widehat{\bSigma}_{\widehat{\bxi}} - \bSigma_{\bxi^0} \| \ge \epsilon ).
		\end{eqnarray*}
		Since $ \| \widehat{\bSigma}_{\widehat{\bxi}} - \bSigma_{\bxi^0} \| = \oP(1)$, we have $\| \widehat{\bSigma}_{\widehat{\bxi}}^{-1} \| = \OP(1)$. Finally, by (\ref{eq:inv}), $ \| \widehat{\bSigma}_{\widehat{\bxi}}^{-1} - \bSigma_{\bxi^0}^{-1} \| =  \OP( \| \widehat{\bSigma}_{\widehat{\bxi}} - \bSigma_{\bxi^0} \|) = \OP \left( L_p^2 \sqrt{\displaystyle \frac{p}{n}} +  s_0 \lambda \right)$.
	\end{proof}

	\begin{lemma}
		Under  (C1)-(C3), when $\displaystyle \frac{p}{n} \rightarrow 0$, it holds that for any vector $\balpha_n \in \mathbb{R}^{p}$ with $\|\balpha_n\|_2 = 1$,
		\[
		\displaystyle \frac{\sqrt{n} \balpha_n^T \bTheta_{\bxi^0} \mathbb{P}_n \dot{\bm{\rho}}_{\bxi^0}}{\sqrt{\balpha_n^T\bTheta_{\bxi^0} \balpha_n}} \overset{d}{\rightarrow} N(0,1).
		\]
	\end{lemma}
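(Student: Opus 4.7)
The plan is to recognize $\sqrt{n}\,\balpha_n^T \bTheta_{\bxi^0}\Pn \dot{\bm{\rho}}_{\bxi^0}$ as a normalized sum of i.i.d., mean-zero scalars and then apply a triangular-array CLT (Lyapunov's version). Set $Z_i := \balpha_n^T \bTheta_{\bxi^0}\,\dot{\bm{\rho}}_{\bxi^0}(y_i,\bx_i)$, so the quantity of interest equals $(n\sigma_n^2)^{-1/2}\sum_{i=1}^n Z_i$ with $\sigma_n^2 := \balpha_n^T \bTheta_{\bxi^0}\balpha_n$. Because $\bxi^0$ is the population minimizer of $\E\rho_{\bxi}(y,\bx)$, we have $\E[\dot{\bm{\rho}}_{\bxi^0}(y_i,\bx_i)] = \bm{0}$; combined with the information equality $\E[\dot{\bm{\rho}}_{\bxi^0}\dot{\bm{\rho}}_{\bxi^0}^T] = \bSigma_{\bxi^0}$ (which follows from $\mathrm{Var}(y_i\mid\bx_i)=\ddot{b}(\bx_i^T\bxi^0)$ in the linear exponential family), this yields $\mathrm{Var}(Z_i) = \balpha_n^T \bTheta_{\bxi^0}\bSigma_{\bxi^0}\bTheta_{\bxi^0}\balpha_n = \sigma_n^2$, so the ratio in the lemma has unit variance.

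Next I would verify Lyapunov's condition with $\delta = 2$, i.e.\ show $n\,\E|Z_i/(\sqrt{n}\sigma_n)|^4 = \E|Z_i|^4/(n\sigma_n^4)\to 0$. Writing $\dot{\bm{\rho}}_{\bxi^0}(y_i,\bx_i) = \dot{\rho}(y_i,\bx_i^T\bxi^0)\,\bx_i$, we have $|Z_i|\le |\dot{\rho}(y_i,\bx_i^T\bxi^0)|\cdot|\balpha_n^T\bTheta_{\bxi^0}\bx_i|$. By (C3), $|\dot{\rho}(y_i,\bx_i^T\bxi^0)|\le K_1$ uniformly in $i$; by (C1), $\|\bx_i\|_2\le K\sqrt{p}$; by (C2), $\|\bTheta_{\bxi^0}\|\le c_{\min}^{-1}$. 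Hence $|\balpha_n^T\bTheta_{\bxi^0}\bx_i|\le c_{\min}^{-1}K\sqrt{p}$ and $\sup_i|Z_i|\le C\sqrt{p}$ for a constant $C$ depending only on $K,K_1,c_{\min}$. I would then invoke the elementary inequality $\E|Z_i|^4 \le (\sup|Z_i|)^2\,\E|Z_i|^2 \le C^2 p\,\sigma_n^2$ in place of a crude fourth-power bound. Combined with the lower bound $\sigma_n^2\ge \lambdamin(\bTheta_{\bxi^0}) = 1/\lambdamax(\bSigma_{\bxi^0}) \ge c_{\max}^{-1}$ from (C2),
\[
\frac{\E|Z_i|^4}{n\sigma_n^4} \;\le\; \frac{C^2 p}{n\sigma_n^2} \;\le\; \frac{C^2 c_{\max}\,p}{n} \;\longrightarrow\; 0,
\]
which is exactly Lyapunov's condition; the classical triangular-array CLT then delivers the claimed convergence to $N(0,1)$.

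The only delicate point is making the dimension-dependent Lyapunov bound match the stated assumption $p/n \to 0$ rather than the cruder $p^2/n\to 0$ that a direct fourth-power bound $\E|Z_i|^4\le(\sup|Z_i|)^4$ would force. The sharpening $\E|Z_i|^4\le(\sup|Z_i|)^2\,\E|Z_i|^2$ is the key move: it absorbs one factor of $p$ into the variance $\sigma_n^2$ (which is constant-bounded from below by (C2)), leaving only a $p/n$ term. Apart from this small refinement, everything is routine: mean-zero scores, the information equality, the uniform envelope coming from (C1) and (C3), and the eigenvalue control from (C2) do all the work.
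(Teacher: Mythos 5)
Your proof is correct and is essentially the paper's argument recast in Lyapunov rather than Lindeberg form: you identify the same mean-zero scores $Z_i = \balpha_n^T \bTheta_{\bxi^0}\dot{\bm{\rho}}_{\bxi^0}(y_i,\bx_i)$ with variance $\sigma_n^2 = \balpha_n^T\bTheta_{\bxi^0}\balpha_n$ via the information equality, and your sharpened bound $\E|Z_i|^4 \le (\sup_i |Z_i|)^2\,\E|Z_i|^2$ reduces the Lyapunov ratio to exactly the quantity $\sup_i Z_i^2/(n\sigma_n^2) = \mathcal{O}(p/n)$ that the paper bounds almost surely (using the envelope $K_1 c_{\min}^{-1} K\sqrt{p}$ from (C1)--(C3) and the eigenvalue control $\sigma_n^2 \ge c_{\max}^{-1}$ from (C2)) to make the Lindeberg indicators vanish. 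Since Lyapunov's condition implies Lindeberg's, the two verifications are interchangeable and rest on the same computation, yielding the same rate requirement $p/n \rightarrow 0$.
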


	\begin{proof}[\textbf{Proof}]
		We invoke the Lindeberg-Feller Central Limit Theorem. For $i = 1, \cdots, n$, let 
		\[
		Z_{ni} = \displaystyle \frac{1}{\sqrt{n}} \balpha_n^T \bTheta_{\bxi^0} \dot{\bm{\rho}}_{\bxi^0}(y_i, \bx_i) = \displaystyle \frac{1}{\sqrt{n}} \balpha_n^T \bTheta_{\bxi^0} \bx_i \dot{\rho}(y_i, \bx_i^T \bxi^0),
		\]
		and $s_n^2 = Var \left( \sum_{i=1}^n Z_{ni} \right)$. Note that $\E[\dot{\rho}(y_i, \bx_i^T \bxi^0) | \bx_i]=0$ and consequently $\E(Z_{ni})=0$. Because $\{ (y_i, \widetilde{\bx}_i)\}_{i=1}^n$ are $i.i.d.$, we can show that $s_n^2 = \balpha_n^T \bTheta_{\bxi^0} \balpha_n$.
		To show $\displaystyle \frac{\sum_{i=1}^n Z_{ni}}{s_n} \overset{d}{\rightarrow} N(0,1)$, we first check the Lindeberg condition and then the conclusion shall follow by the Lindeberg-Feller Central Limit Theorem. Specifically, for any $\epsilon > 0$, we show that as $n \rightarrow \infty$,
		\[
		\displaystyle \frac{1}{s_n^2} \sum_{i=1}^n \E \left\{ Z_{ni}^2 \cdot {1}_{(|Z_{ni}| > \epsilon s_n)} \right\} \rightarrow 0.
		\]
		Due to the boundedness of the eigenvalues of $\bSigma_{\bxi^0}$, $\balpha_n ^T \bTheta_{\bxi^0} \balpha_n \ge \lambdamin(\bTheta_{\bxi^0}) = 1/\lambdamax(\bSigma_{\bxi^0}) \ge c_{\mathrm{max}}^{-1}$. On the other hand, by the  Cauchy-Schwarz inequality, it holds almost surely that
		\[
		\left(  \balpha_n^T \bTheta_{\bxi^0} \bx_i  \right)^2  \le  \| \balpha_n \|_2^2 \cdot \| \bTheta_{\bxi^0} \bx_i \|_2^2 
		\le  \left[   \| \bTheta_{\bxi^0} \| \cdot \|\bx_i\|_2  \right]^2 
		\le  c_{\mathrm{min}}^{-2} \cdot \mathcal{O}(pK^2).
		\]
		Inside the indicator, it holds almost surely that
		\[
		\begin{array}{rcl}
		\displaystyle \frac{Z_{ni}^2}{s_n^2} & = & \displaystyle \frac{[\dot{\rho}(y_i, \bx_i^T \bxi_0)]^2 \left(  \balpha_n^T \bTheta_{\bxi^0} \bx_i  \right)^2}{n \balpha_n ^T \bTheta_{\bxi^0} \balpha_n} \\
		& \le & [\dot{\rho}(y_i, \bx_i^T \bxi_0)]^2 \cdot c_{\mathrm{min}}^{-2} c_{\mathrm{max}} \cdot \mathcal{O}(K^2 \displaystyle \frac{p}{n}) \\
		& \le & K_1^2 c_{\mathrm{min}}^{-2} c_{\mathrm{max}} \cdot \mathcal{O}(K^2 \displaystyle \frac{p}{n}),
		\end{array}
		\]
		where the last inequality follows from the boundedness of $\dot{\rho}(y_i, \bx_i^T \bxi_0)$ in condition (C3).  Hence, we have $Z_{ni}^2 / s_n^2 \rightarrow 0$ almost surely as $p/n \rightarrow 0$. When $n$ is large enough, $Z_{ni}^2 / s_n^2 < \epsilon^2$ and all the indicators become 0. Therefore, by the Dominated Convergence Theorem, the Lindeberg condition holds and the Lindeber-Feller Central Limit Theorem guarantees the asymptotic normality.  
	\end{proof}

	\begin{proof}[\textbf{Proof of Theorem 1}]
		Recall that from (\ref{eq:derive_bhat_inv2}), 
		\[
		\sqrt{n} \balpha_n^T (\widetilde{\bb} - \bxi^0) - \sqrt{n} \balpha_n^T \widetilde{\bTheta}\bDelta = - \sqrt{n} \balpha_n^T \widetilde{\bTheta} \Pn \dot{\bm{\rho}}_{\bxi^0}.
		\]
		First, we show that $\balpha_n^T \widetilde{\bTheta} \balpha_n - \balpha_n^T {\bTheta}_{\bxi^0} \balpha_n = o_{\mathbb{P}}(1)$ and that $\displaystyle \frac{\sqrt{n}\balpha_n^T \widetilde{\bTheta} \mathbb{P}_n \dot{\bm{\rho}}_{\bxi^0}}{\sqrt{\balpha_n^T \widetilde{\bTheta} \balpha_n}} = \displaystyle \frac{\sqrt{n}\balpha_n^T {\bTheta}_{\bxi^0} \mathbb{P}_n \dot{\bm{\rho}}_{\bxi^0}}{\sqrt{\balpha_n^T {\bTheta}_{\bxi^0} \balpha_n}} + o_{\mathbb{P}}(1)$. Then by Slutsky's Theorem, the asymptotic distribution of the target $\displaystyle \frac{\sqrt{n}\balpha_n^T \widetilde{\bTheta} \mathbb{P}_n \dot{\bm{\rho}}_{\bxi^0}}{\sqrt{\balpha_n^T \widetilde{\bTheta} \balpha_n}}$ can be derived by using the asymptotic distribution of  $\displaystyle \frac{\sqrt{n}\balpha_n^T {\bTheta}_{\bxi^0} \mathbb{P}_n \dot{\bm{\rho}}_{\bxi^0}}{\sqrt{\balpha_n^T {\bTheta}_{\bxi^0} \balpha_n}}$, which has been proved in Lemma 3. In the final step, as long as $\sqrt{n} \balpha_n^T \widetilde{\bTheta} \bDelta = \oP(1)$, the asymptotic distribution of $\displaystyle \frac{\sqrt{n}\balpha_n^T(\widetilde{\bb} - \bxi^0)}{\sqrt{\balpha_n^T \widetilde{\bTheta} \balpha_n}}$ follows immediately.
		
		According to Lemma 2, it follows that 
		\[
		\vert \balpha_n^T \widetilde{\bTheta} \balpha_n - \balpha_n^T {\bTheta}_{\bxi^0} \balpha_n \vert = \vert \balpha_n^T  (\widetilde{\bTheta} - {\bTheta}_{\bxi^0} ) \balpha_n \vert \le \| \widetilde{\bTheta} - {\bTheta}_{\bxi^0} \| \cdot \| \balpha_n \|_2^2 =   \oP(1).
		\]
		By the Cauchy-Schwartz inequality,
		\[
		\sqrt{n} \vert \balpha_n^T \widetilde{\bTheta} \mathbb{P}_n \dot{\bm{\rho}}_{\bxi^0} - \balpha_n^T {\bTheta}_{\bxi^0} \mathbb{P}_n \dot{\bm{\rho}}_{\bxi^0} \vert \le \sqrt{n} \| \balpha_n \|_2 \cdot \| (\widetilde{\bTheta} - \bTheta_{\bxi^0}) \Pn \dot{\bm{\rho}}_{\bxi^0} \|_2.
		\]
		Since 
		\[
		\begin{array}{rcl}
		\| (\widetilde{\bTheta} - \bTheta_{\bxi^0}) \Pn \dot{\bm{\rho}}_{\bxi^0} \|_2 & \le & \| \widetilde{\bTheta} - \bTheta_{\bxi^0} \| \cdot \| \Pn \dot{\bm{\rho}}_{\bxi^0} \|_2 \\
		& \le  &  \| \widetilde{\bTheta} - \bTheta_{\bxi^0}   \| \cdot \sqrt{p}  \|  \Pn \dot{\bm{\rho}}_{\bxi^0} \|_{\infty},
		\end{array}
		\]
		we have 
		\[ \begin{array}{rcl}
		\sqrt{n} \left\vert \balpha_n^T \widetilde{\bTheta} \mathbb{P}_n \dot{\bm{\rho}}_{\bxi^0} - \balpha_n^T {\bTheta}_{\bxi^0} \mathbb{P}_n \dot{\bm{\rho}}_{\bxi^0} \right\vert & \le & \sqrt{np}  \cdot \|  \Pn \dot{\bm{\rho}}_{\bxi^0} \|_{\infty} \cdot \OP\left( \displaystyle  L_p^2\sqrt{ \frac{p}{n}} + s_0 \lambda \right) \\
		& = & \|  \Pn \dot{\bm{\rho}}_{\bxi^0} \|_{\infty} \cdot \OP\left( \displaystyle L_p^2 p + \sqrt{np}  s_0 \lambda \right).
		\end{array}
		\]
		By definition, 
		\[
		\| \Pn \dot{\bm{\rho}}_{\bxi^0} \|_{\infty} = \max_j \left\vert  \displaystyle \frac{1}{n} \sum_{i=1}^n \dot{\bm{\rho}}_{\bxi^0}(y_i, \bx_i) \right\vert = \max_j \left\vert  \displaystyle \frac{1}{n} \sum_{i=1}^n x_{ij}\dot{\rho}(y_i, \bx_i^T \bxi^0) \right\vert.
		\]
		Assume $|\dot{\rho}(y_i, \bx_i^T \bxi^0)| \le K_1$ for all $i$ and the constant $K_1 > 0$ in condition (C3). As $| x_{ij} \dot{\rho}(y_i, \bx_i^T \bxi^0) | \le K K_1$ almost surely holds for all $i$ and $j$, we apply Lemma 14.15 in \citet{buhlmann2011statistics}, for all $t > 0$,
		\[
		\mathbb{P} \left( \max_j \left\vert  \displaystyle \frac{1}{n} \sum_{i=1}^n x_{ij}\dot{\rho}(y_i, \bx_i^T \bxi^0) \right\vert \ge K K_1 \sqrt{2\left( t^2 + \displaystyle \frac{\log(2p)}{n} \right)}   \right) \le \exp[-nt^2].
		\] 
		For $t^2 = \displaystyle \frac{\log(2p)}{n}$, we know that $\| \Pn \dot{\bm{\rho}}_{\bxi^0} \|_{\infty} = \OP \left(  \sqrt{ \displaystyle \frac{\log(p)}{n} } \right)$. Then we have 
		\[
		\sqrt{n} \left\vert \balpha_n^T \widetilde{\bTheta} \mathbb{P}_n \dot{\bm{\rho}}_{\bxi^0} - \balpha_n^T {\bTheta}_{\bxi^0} \mathbb{P}_n \dot{\bm{\rho}}_{\bxi^0} \right\vert \le 
		\OP \left( L_p^2 p \sqrt{\displaystyle \frac{\log(p)}{n}} +  s_0 \lambda \sqrt{p \log(p)} \right),
		\]
		which is $\oP(1)$ by our assumption.
		
		Finally, we prove $|\sqrt{n} \balpha_n^T \widetilde{\bTheta} \bDelta| = \oP(1)$. By the Cauchy-Schwartz inequality, $|\sqrt{n} \balpha_n^T \widetilde{\bTheta} \bDelta| \le \sqrt{n} \| \widetilde{\bTheta} \bDelta \|_2$, we only need that $\sqrt{n}\| \widetilde{\bTheta} \bDelta \|_2 = \oP(1)$. In equation (\ref{eq:taylor}), 
		\[
		\Delta_j  = \displaystyle \frac{1}{n} \sum_{i=1}^n  \left(  \ddot{\rho}(y_i, a_i^*) - \ddot{\rho}(y_i, \bx_i^T \widehat{\bxi})  \right) x_{ij} \bx_i^T (\bxi^0 - \widehat{\bxi}),
		\]
		where $a_i^*$ lies between $\bx_i^T\widehat{\bxi}$ and $\bx_i^T\bxi^0$, i.e. $ | a_i^* - \bx_i^T \widehat{\bxi} | \le | \bx_i^T ( \widehat{\bxi} - \bxi^0 ) | $. Then uniformly for all $j$,
		\[
		\begin{array}{rcl}
		| \Delta_j | & \le & \displaystyle \frac{1}{n} \sum_{i=1}^n  \vert \ddot{\rho}(y_i, a_i^*) - \ddot{\rho}(y_i, \bx_i^T \widehat{\bxi}) \vert \cdot |x_{ij}| \cdot  | \bx_i^T(\bxi^0 - \widehat{\bxi})| \\
		& \le & \displaystyle \frac{1}{n} \sum_{i=1}^n c_{Lip} | a_i^* - \bx_i^T \widehat{\bxi}   |  \cdot K \cdot | \bx_i^T(\bxi^0 - \widehat{\bxi})| \\
		& \le & \displaystyle c_{Lip}   K \cdot \frac{1}{n} \sum_{i=1}^n  | \bx_i^T(\bxi^0 - \widehat{\bxi})|^2 \\ 
		& = &  c_{Lip}   K \cdot  \OP(  s_0 \lambda^2) \\
		& = &  \OP(  s_0 \lambda^2),
		\end{array}
		\]
		where the last equality holds by Lemma 1. Since $\|  \bTheta_{\bxi^0}\| = \mathcal{O}(1)$ and $\| \widetilde{\bTheta} - \bTheta_{\bxi^0} \| = \oP(1)$, then $\| \widetilde{\bTheta} \| = \OP(1)$, and we  have
		\[
		\begin{array}{rcl}
		\sqrt{n} \| \widetilde{\bTheta} \bDelta \|_2 &  \le & \sqrt{n} \| \widetilde{\bTheta} \| \cdot \| \bDelta \|_{2} \\
		& \le & \sqrt{n} \OP(1) \cdot  \sqrt{p} \|\bDelta\|_{\infty} \\
		& \le & \OP( \sqrt{np}  s_0 \lambda^2 ).
		\end{array}
		\]
		By the assumption of $\sqrt{np}s_0\lambda^2 = o(1)$, $\sqrt{n}\|\widetilde{\bTheta}\bDelta\|_2 = \oP(1)$. 
		Applying Slutsky's Theorem and Lemma 3 gives the results.
	\end{proof}

	
	\bibliographystyle{apalike}
	
	\bibliography{thesis1-bib}

\begin{thebibliography}{}

\bibitem[Amos et~al., 2008]{amos2008genome}
Amos, C.~I., Wu, X., Broderick, P., Gorlov, I.~P., Gu, J., Eisen, T., Dong, Q.,
  Zhang, Q., Gu, X., Vijayakrishnan, J., et~al. (2008).
\newblock Genome-wide association scan of tag {SNP}s identifies a
  susceptibility locus for lung cancer at 15q25. 1.
\newblock {\em Nature Genetics}, 40(5):616--622.

\bibitem[B{\"u}hlmann and van~de Geer, 2011]{buhlmann2011statistics}
B{\"u}hlmann, P. and van~de Geer, S. (2011).
\newblock {\em Statistics for high-dimensional data: {Methods}, theory and
  applications}.
\newblock Heidelberg: Springer.

\bibitem[Cai et~al., 2019]{cai2019sparse}
Cai, T.~T., Zhang, A., and Zhou, Y. (2019).
\newblock Sparse group lasso: {Optimal} sample complexity, convergence rate,
  and statistical inference.
\newblock {\em arXiv preprint arXiv:1909.09851}.

\bibitem[Candes and Tao, 2007]{candes2007dantzig}
Candes, E. and Tao, T. (2007).
\newblock The {Dantzig} selector: {Statistical} estimation when $p$ is much
  larger than $n$.
\newblock {\em The Annals of Statistics}, 35(6):2313--2351.

\bibitem[Dezeure et~al., 2017]{dezeure2017high}
Dezeure, R., B{\"u}hlmann, P., and Zhang, C.-H. (2017).
\newblock High-dimensional simultaneous inference with the bootstrap.
\newblock {\em Test}, 26(4):685--719.

\bibitem[Doyle et~al., 2011]{doyle2011vitro}
Doyle, G.~A., Wang, M.-J., Chou, A.~D., Oleynick, J.~U., Arnold, S.~E., Buono,
  R.~J., Ferraro, T.~N., and Berrettini, W.~H. (2011).
\newblock \textit{In Vitro} and \textit{Ex Vivo} analysis of \textit{CHRNA3}
  and \textit{CHRNA5} haplotype expression.
\newblock {\em PloS One}, 6(8):e23373.

\bibitem[Eftekhari et~al., 2019]{eftekhari2019inference}
Eftekhari, H., Banerjee, M., and Ritov, Y. (2019).
\newblock Inference in general single-index models under high-dimensional
  symmetric designs.
\newblock {\em arXiv preprint arXiv:1909.03540}.

\bibitem[Evans and Relling, 2004]{evans2004moving}
Evans, W.~E. and Relling, M.~V. (2004).
\newblock Moving towards individualized medicine with pharmacogenomics.
\newblock {\em Nature}, 429(6990):464--468.

\bibitem[Fan and Li, 2001]{fan2001variable}
Fan, J. and Li, R. (2001).
\newblock Variable selection via nonconcave penalized likelihood and its oracle
  properties.
\newblock {\em Journal of the American Statistical Association},
  96(456):1348--1360.

\bibitem[Fan and Peng, 2004]{fan2004nonconcave}
Fan, J. and Peng, H. (2004).
\newblock Nonconcave penalized likelihood with a diverging number of
  parameters.
\newblock {\em The Annals of Statistics}, 32(3):928--961.

\bibitem[Friedman et~al., 2010]{friedman2010regularization}
Friedman, J., Hastie, T., and Tibshirani, R. (2010).
\newblock Regularization paths for generalized linear models via coordinate
  descent.
\newblock {\em Journal of Statistical Software}, 33(1):1--22.

\bibitem[Gabrielsen et~al., 2013]{gabrielsen2013association}
Gabrielsen, M.~E., Romundstad, P., Langhammer, A., Krokan, H.~E., and Skorpen,
  F. (2013).
\newblock Association between a 15q25 gene variant, nicotine-related habits,
  lung cancer and {COPD} among 56307 individuals from the {HUNT} study in
  {Norway}.
\newblock {\em European Journal of Human Genetics}, 21(11):1293--1299.

\bibitem[Guan and Stephens, 2011]{guan2011bayesian}
Guan, Y. and Stephens, M. (2011).
\newblock Bayesian variable selection regression for genome-wide association
  studies and other large-scale problems.
\newblock {\em The Annals of Applied Statistics}, 5(3):1780--1815.

\bibitem[Halld{\'e}n et~al., 2016]{hallden2016gene}
Halld{\'e}n, S., Sj{\"o}gren, M., Hedblad, B., Engstr{\"o}m, G., Hamrefors, V.,
  Manjer, J., and Melander, O. (2016).
\newblock Gene variance in the nicotinic receptor cluster
  {(CHRNA5-CHRNA3-CHRNB4)} predicts death from cardiopulmonary disease and
  cancer in smokers.
\newblock {\em Journal of Internal Medicine}, 279(4):388--398.

\bibitem[He and Lin, 2010]{he2010variable}
He, Q. and Lin, D.-Y. (2010).
\newblock A variable selection method for genome-wide association studies.
\newblock {\em Bioinformatics}, 27(1):1--8.

\bibitem[Hung et~al., 2008]{hung2008susceptibility}
Hung, R.~J., McKay, J.~D., Gaborieau, V., Boffetta, P., Hashibe, M., Zaridze,
  D., Mukeria, A., Szeszenia-Dabrowska, N., Lissowska, J., Rudnai, P., et~al.
  (2008).
\newblock A susceptibility locus for lung cancer maps to nicotinic
  acetylcholine receptor subunit genes on 15q25.
\newblock {\em Nature}, 452(7187):633--637.

\bibitem[Javanmard and Montanari, 2014]{javanmard2014confidence}
Javanmard, A. and Montanari, A. (2014).
\newblock Confidence intervals and hypothesis testing for high-dimensional
  regression.
\newblock {\em Journal of Machine Learning Research}, 15(1):2869--2909.

\bibitem[Lee et~al., 2016]{lee2016exact}
Lee, J.~D., Sun, D.~L., Sun, Y., and Taylor, J.~E. (2016).
\newblock Exact post-selection inference, with application to the lasso.
\newblock {\em The Annals of Statistics}, 44(3):907--927.

\bibitem[McKay et~al., 2017]{mckay2017large}
McKay, J.~D., Hung, R.~J., Han, Y., Zong, X., Carreras-Torres, R., Christiani,
  D.~C., Caporaso, N.~E., Johansson, M., Xiao, X., Li, Y., et~al. (2017).
\newblock Large-scale association analysis identifies new lung cancer
  susceptibility loci and heterogeneity in genetic susceptibility across
  histological subtypes.
\newblock {\em Nature Genetics}, 49(7):1126--1132.

\bibitem[Meinshausen and B{\"u}hlmann, 2006]{meinshausen2006high}
Meinshausen, N. and B{\"u}hlmann, P. (2006).
\newblock High-dimensional graphs and variable selection with the lasso.
\newblock {\em The Annals of Statistics}, 34(3):1436--1462.

\bibitem[Mitra and Zhang, 2016]{mitra2016benefit}
Mitra, R. and Zhang, C.-H. (2016).
\newblock The benefit of group sparsity in group inference with de-biased
  scaled group lasso.
\newblock {\em Electronic Journal of Statistics}, 10(2):1829--1873.

\bibitem[Ning and Liu, 2017]{ning2017general}
Ning, Y. and Liu, H. (2017).
\newblock A general theory of hypothesis tests and confidence regions for
  sparse high dimensional models.
\newblock {\em The Annals of Statistics}, 45(1):158--195.

\bibitem[Pintarelli et~al., 2017]{pintarelli2017pharmacogenetic}
Pintarelli, G., Galvan, A., Pozzi, P., Noci, S., Pasetti, G., Sala, F.,
  Pastorino, U., Boffi, R., and Colombo, F. (2017).
\newblock Pharmacogenetic study of seven polymorphisms in three nicotinic
  acetylcholine receptor subunits in smoking-cessation therapies.
\newblock {\em Scientific Reports}, 7(1):16730.

\bibitem[Repapi et~al., 2010]{repapi2010genome}
Repapi, E., Sayers, I., Wain, L.~V., Burton, P.~R., Johnson, T., Obeidat, M.,
  Zhao, J.~H., Ramasamy, A., Zhai, G., Vitart, V., et~al. (2010).
\newblock Genome-wide association study identifies five loci associated with
  lung function.
\newblock {\em Nature genetics}, 42(1):36.

\bibitem[Stevens et~al., 2008]{stevens2008nicotinic}
Stevens, V.~L., Bierut, L.~J., Talbot, J.~T., Wang, J.~C., Sun, J., Hinrichs,
  A.~L., Thun, M.~J., Goate, A., and Calle, E.~E. (2008).
\newblock Nicotinic receptor gene variants influence susceptibility to heavy
  smoking.
\newblock {\em Cancer Epidemiology and Prevention Biomarkers},
  17(12):3517--3525.

\bibitem[Sur and Cand{\`e}s, 2019]{sur2019modern}
Sur, P. and Cand{\`e}s, E.~J. (2019).
\newblock A modern maximum-likelihood theory for high-dimensional logistic
  regression.
\newblock {\em Proceedings of the National Academy of Sciences},
  116(29):14516--14525.

\bibitem[Taylor et~al., 2001]{taylor2001using}
Taylor, J.~G., Choi, E.-H., Foster, C.~B., and Chanock, S.~J. (2001).
\newblock Using genetic variation to study human disease.
\newblock {\em Trends in molecular medicine}, 7(11):507--512.

\bibitem[Thorgeirsson et~al., 2008]{thorgeirsson2008variant}
Thorgeirsson, T.~E., Geller, F., Sulem, P., Rafnar, T., Wiste, A., Magnusson,
  K.~P., Manolescu, A., Thorleifsson, G., Stefansson, H., Ingason, A., et~al.
  (2008).
\newblock A variant associated with nicotine dependence, lung cancer and
  peripheral arterial disease.
\newblock {\em Nature}, 452(7187):638--642.

\bibitem[Tibshirani, 1996]{tibshirani1996regression}
Tibshirani, R. (1996).
\newblock Regression shrinkage and selection via the lasso.
\newblock {\em Journal of the Royal Statistical Society. Series B
  (Methodological)}, 58(1):267--288.

\bibitem[van~de Geer et~al., 2014]{van2014asymptotically}
van~de Geer, S., B{\"u}hlmann, P., Ritov, Y., and Dezeure, R. (2014).
\newblock On asymptotically optimal confidence regions and tests for
  high-dimensional models.
\newblock {\em The Annals of Statistics}, 42(3):1166--1202.

\bibitem[van~de Geer, 2008]{van2008high}
van~de Geer, S.~A. (2008).
\newblock High-dimensional generalized linear models and the lasso.
\newblock {\em The Annals of Statistics}, 36(2):614--645.

\bibitem[Vershynin, 2010]{vershynin2010introduction}
Vershynin, R. (2010).
\newblock Introduction to the non-asymptotic analysis of random matrices.
\newblock {\em arXiv preprint arXiv:1011.3027}.

\bibitem[Vershynin, 2012]{vershynin2012close}
Vershynin, R. (2012).
\newblock How close is the sample covariance matrix to the actual covariance
  matrix?
\newblock {\em Journal of Theoretical Probability}, 25(3):655--686.

\bibitem[Wang, 2011]{wang2011gee}
Wang, L. (2011).
\newblock {GEE} analysis of clustered binary data with diverging number of
  covariates.
\newblock {\em The Annals of Statistics}, 39(1):389--417.

\bibitem[Zhang and Zhang, 2014]{zhang2014confidence}
Zhang, C.-H. and Zhang, S.~S. (2014).
\newblock Confidence intervals for low dimensional parameters in high
  dimensional linear models.
\newblock {\em Journal of the Royal Statistical Society: Series B (Statistical
  Methodology)}, 76(1):217--242.

\bibitem[Zhang and Cheng, 2017]{zhang2017simultaneous}
Zhang, X. and Cheng, G. (2017).
\newblock Simultaneous inference for high-dimensional linear models.
\newblock {\em Journal of the American Statistical Association},
  112(518):757--768.

\bibitem[Zou, 2006]{zou2006adaptive}
Zou, H. (2006).
\newblock The adaptive lasso and its oracle properties.
\newblock {\em Journal of the American Statistical Association},
  101(476):1418--1429.

\bibitem[Zou and Hastie, 2005]{zou2005regularization}
Zou, H. and Hastie, T. (2005).
\newblock Regularization and variable selection via the elastic net.
\newblock {\em Journal of the Royal Statistical Society: Series B (Statistical
  Methodology)}, 67(2):301--320.

\end{thebibliography}

	
	\clearpage
	
	\begin{landscape}

		\begin{table}[t]
			
			\caption{The association between SNPs and lung cancer risk in stratified analysis \\
				\textit{\scriptsize Coefficient estimates in logistic regression models are reported for demographic variables and 11 SNPs (a) among the smokers, and (b) among the non-smokers. The other SNPs are omitted from the table. ``Pos": physical location of a SNP on a chromosome (Assembly GRCh37/hg19); ``Est": estimated coefficient in the logistic regression models for the overall risk of lung cancer; ``SE": estimated standard error; ``CI": confidence interval.}}
			
			\renewcommand\arraystretch{0.6}
			\centering
			\subfloat[\footnotesize Smokers \label{tab:blcs_smo}]{
				\centering
				\scriptsize
				\begin{tabular}{ccccccccccccc}
					\hline
					& & & &   \multicolumn{3}{c}{\textit{REF-DS}} & \multicolumn{3}{c}{\textit{ORIG-DS}} & \multicolumn{3}{c}{\textit{MLE}}  \\
					\cmidrule(lr){5-7}  \cmidrule(lr){8-10}  \cmidrule(l){11-13} 
					\multicolumn{4}{r}{Demographic variable} & Est & SE & 95\% CI  & Est & SE & 95\% CI & Est & SE & 95\% CI   \\ \hline
					\multicolumn{4}{r}{Education: No high school} & 0.44 & 0.20 & (0.05, 0.83) & 0.48 & 0.19 & (0.11, 0.85) & 0.52 & 0.22 & (0.09, 0.94) \\ 
					\multicolumn{4}{r}{Education: High school graduate} & 0.00 & 0.15 & (-0.29, 0.28) & -0.02 & 0.14 & (-0.29, 0.25) & -0.01 & 0.16 & (-0.32, 0.30) \\ 
					\multicolumn{4}{r}{Gender: Male} & -0.14 & 0.13 & (-0.40, 0.12) & -0.16 & 0.13 & (-0.40, 0.09) & -0.15 & 0.14 & (-0.43, 0.13) \\ 
					\multicolumn{4}{r}{Age in years} & 0.04 & 0.07 & (-0.09, 0.17) & 0.05 & 0.06 & (-0.07, 0.17) & 0.05 & 0.07 & (-0.09, 0.19) \\ 
					\hline
					SNP & Pos & Allele & Gene & Est & SE & 95\% CI  & Est & SE & 95\% CI & Est & SE & 95\% CI   \\ 
					\hline
					AX-15319183 & 6:167352075 & C/G & \textit{RNASET2} & 0.01 & 0.19 & (-0.36, 0.39) & -0.03 & 0.18 & (-0.39, 0.33) & 0.02 & 0.20 & (-0.38, 0.42) \\ 
					AX-41911849 & 6:167360724 & A/G & \textit{RNASET2} & 0.43 & 0.22 & (0.00, 0.86) & 0.44 & 0.20 & (0.06, 0.83) & 0.49 & 0.24 & (0.03, 0.96) \\ 
					AX-42391645 & 8:27319769 & G/C & \textit{CHRNA2} & 0.01 & 0.16 & (-0.29, 0.32) & -0.01 & 0.14 & (-0.28, 0.26) & 0.01 & 0.16 & (-0.31, 0.34) \\ 
					\textbf{AX-38419741} & 8:27319847 & T/A & \textit{CHRNA2} & \textbf{0.11} & 0.35 & (-0.59, 0.80) & \textbf{-0.14} & 0.31 & (-0.75, 0.48) & 0.13 & 0.37 & (-0.60, 0.86) \\ 
					\textbf{AX-15934253} & 8:27334098 & T/C & \textit{CHRNA2} & \textbf{-0.15} & 0.44 & (-1.02, 0.71) & \textbf{0.06} & 0.39 & (-0.70, 0.82) & -0.19 & 0.47 & (-1.12, 0.74) \\ 
					AX-12672764 & 13:32927894 & T/C & \textit{BRCA2} & -0.07 & 0.19 & (-0.44, 0.31) & -0.10 & 0.16 & (-0.40, 0.21) & -0.07 & 0.20 & (-0.47, 0.33) \\ 
					AX-31620127 & 15:48016563 & C/T & \textit{SEMA6D} & 0.79 & 0.26 & (0.28, 1.31) & 0.79 & 0.25 & (0.30, 1.28) & 0.96 & 0.30 & (0.37, 1.55) \\ 
					\textbf{AX-88891100} & 15:78857896 & A/T & \textit{CHRNA5} & 0.87 & 0.36 & (0.17, 1.57) & 0.79 & 0.32 & (0.16, 1.41) & 0.98 & 0.39 & (0.22, 1.74) \\ 
					\textbf{AX-39952685} & 15:78867042 & G/C & \textit{CHRNA5} & 0.99 & 0.47 & (0.07, 1.91) & 0.82 & 0.38 & (0.09, 1.56) & 1.11 & 0.50 & (0.13, 2.08) \\ 
					AX-62479186 & 15:78878565 & T/C & \textit{CHRNA5} & 0.41 & 0.41 & (-0.40, 1.22) & 0.46 & 0.39 & (-0.31, 1.23) & 0.46 & 0.45 & (-0.41, 1.33) \\ 
					AX-88907114 & 19:41353727 & T/C & \textit{CYP2A6} & 0.52 & 0.34 & (-0.16, 1.19) & 0.49 & 0.33 & (-0.15, 1.13) & 0.58 & 0.38 & (-0.15, 1.32) \\ 
					$\vdots$ & & & & & & & & & & & &  \\
					\hline
				\end{tabular}
			}
			
			\subfloat[\footnotesize Non-smokers \label{tab:blcs_nonsmo}]{
				\centering
				\scriptsize
				\begin{tabular}{ccccccccccccc}
					\hline
					& & & &   \multicolumn{3}{c}{\textit{REF-DS}} & \multicolumn{3}{c}{\textit{ORIG-DS}} & \multicolumn{3}{c}{\textit{MLE}}  \\ 
					\cmidrule(lr){5-7}  \cmidrule(lr){8-10}  \cmidrule(l){11-13} 
					\multicolumn{4}{r}{Demographic variable} & Est & SE & 95\% CI  & Est & SE & 95\% CI & Est & SE & 95\% CI   \\ \hline
					\multicolumn{4}{r}{Education: No high school} & -0.84 & 0.93 & (-2.67, 0.99) & -0.58 & 0.78 & (-2.11, 0.95) & -10.53 & 3.82 & (-18.01, -3.04) \\ 
					\multicolumn{4}{r}{Education: High school graduate} & -1.68 & 0.52 & (-2.69, -0.66) & -1.56 & 0.43 & (-2.39, -0.72) & -11.23 & 3.75 & (-18.58, -3.88) \\ 
					\multicolumn{4}{r}{Gender: Male} & -0.30 & 0.41 & (-1.10, 0.51) & -0.16 & 0.32 & (-0.78, 0.46) & -1.94 & 1.03 & (-3.96, 0.09) \\ 
					\multicolumn{4}{r}{Age in years} & -0.52 & 0.20 & (-0.91, -0.13) & -0.56 & 0.16 & (-0.87, -0.26) & -2.59 & 0.98 & (-4.51, -0.67) \\ 
					\hline
					SNP & Pos & Allele & Gene & Est & SE & 95\% CI  & Est & SE & 95\% CI & Est & SE & 95\% CI   \\
					\hline
					AX-15319183 & 6:167352075 & C/G & \textit{RNASET2} & -0.71 & 0.55 & (-1.78, 0.36) & 0.01 & 0.40 & (-0.79, 0.80) & -4.32 & 1.84 & (-7.92, -0.71) \\ 
					AX-41911849 & 6:167360724 & A/G & \textit{RNASET2} & 0.69 & 0.65 & (-0.59, 1.97) & 0.37 & 0.47 & (-0.55, 1.29) & 4.46 & 2.00 & (0.54, 8.39) \\ 
					\textbf{AX-42391645} & 8:27319769 & G/C & \textit{CHRNA2} & \textbf{-0.11} & 0.49 & (-1.07, 0.85) & \textbf{0.18} & 0.30 & (-0.41, 0.78) & -1.90 & 2.00 & (-5.81, 2.02) \\ 
					AX-38419741 & 8:27319847 & T/A & \textit{CHRNA2} & 0.50 & 1.04 & (-1.54, 2.53) & 0.23 & 0.61 & (-0.97, 1.42) & 3.37 & 3.00 & (-2.51, 9.26) \\ 
					AX-15934253 & 8:27334098 & T/C & \textit{CHRNA2} & 0.11 & 1.40 & (-2.64, 2.86) & 0.38 & 0.82 & (-1.23, 1.98) & 5.37 & 4.21 & (-2.88, 13.62) \\ 
					AX-12672764 & 13:32927894 & T/C & \textit{BRCA2} & -0.83 & 0.62 & (-2.04, 0.37) & -0.57 & 0.38 & (-1.32, 0.18) & -8.25 & 2.64 & (-13.42, -3.08) \\ 
					\textbf{AX-31620127} & 15:48016563 & C/T & \textit{SEMA6D} & 1.77 & 0.75 & \textbf{(0.30, 3.24)} & 0.43 & 0.46 & \textbf{(-0.48, 1.34)} & 9.23 & 3.27 & (2.81, 15.64) \\ 
					AX-88891100 & 15:78857896 & A/T & \textit{CHRNA5} & 0.78 & 1.18 & (-1.54, 3.10) & 1.15 & 0.87 & (-0.56, 2.85) & 1.54 & 3.17 & (-4.68, 7.75) \\ 
					AX-39952685 & 15:78867042 & G/C & \textit{CHRNA5} & -0.54 & 1.30 & (-3.09, 2.01) & -0.99 & 0.73 & (-2.41, 0.44) & -2.85 & 3.98 & (-10.65, 4.96) \\ 
					\textbf{AX-62479186} & 15:78878565 & T/C & \textit{CHRNA5} & -1.28 & 1.34 & (-3.92, 1.35) & -1.33 & 1.10 & (-3.49, 0.82) & \textbf{-19.64} & \textbf{3410.98} & (-6705.04, 6665.75) \\ 
					\textbf{AX-88907114} & 19:41353727 & T/C & \textit{CYP2A6} & 0.86 & 0.88 & \textbf{(-0.86, 2.59)} & 1.40 & 0.68 & \textbf{(0.06, 2.74)} & 3.52 & 2.18 & (-0.75, 7.78) \\ 
					$\vdots$ & & & & & & & & & & & &  \\
					\hline
				\end{tabular}
			}
			
			\label{tab:blcs_strat1}
		\end{table}
		
	\end{landscape}

	\begin{figure}[h!] 
		\centering
		\includegraphics[width=0.8\textwidth]{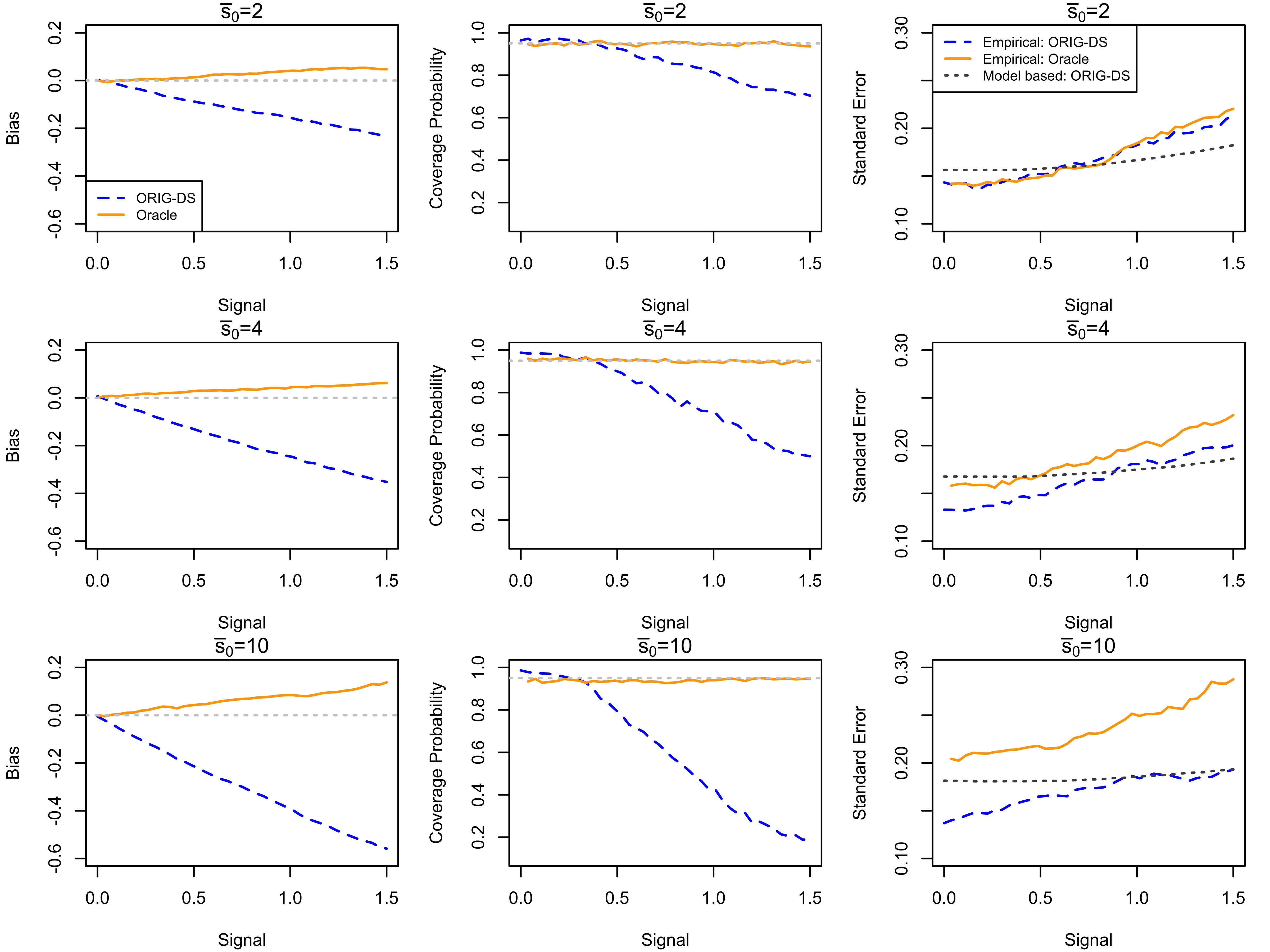}
		\caption{Simulation results of logistic regression with sample size $n=300$ and $p=500$ covariates. Covariates are first generated from multivariate Gaussian distribution with mean zero, AR(1) covariance structure and correlation 0.7, and truncated at $\pm 6$.  Each row presents estimation bias, empirical coverage probability and standard error (both model-based and empirical) of the estimated $\beta_1^0$, with 2, 4 and 10 additional signals fixed at 1 from the top to the bottom, respectively. ``ORIG-DS" and ``Oracle" stand for the original de-biased lasso estimator and the oracle estimator as if the true model were known, respectively.}
		\label{fig:sim_largep}
	\end{figure}

	\begin{landscape}
		\begin{figure}
			\centering
			\includegraphics[height=0.8\textheight]{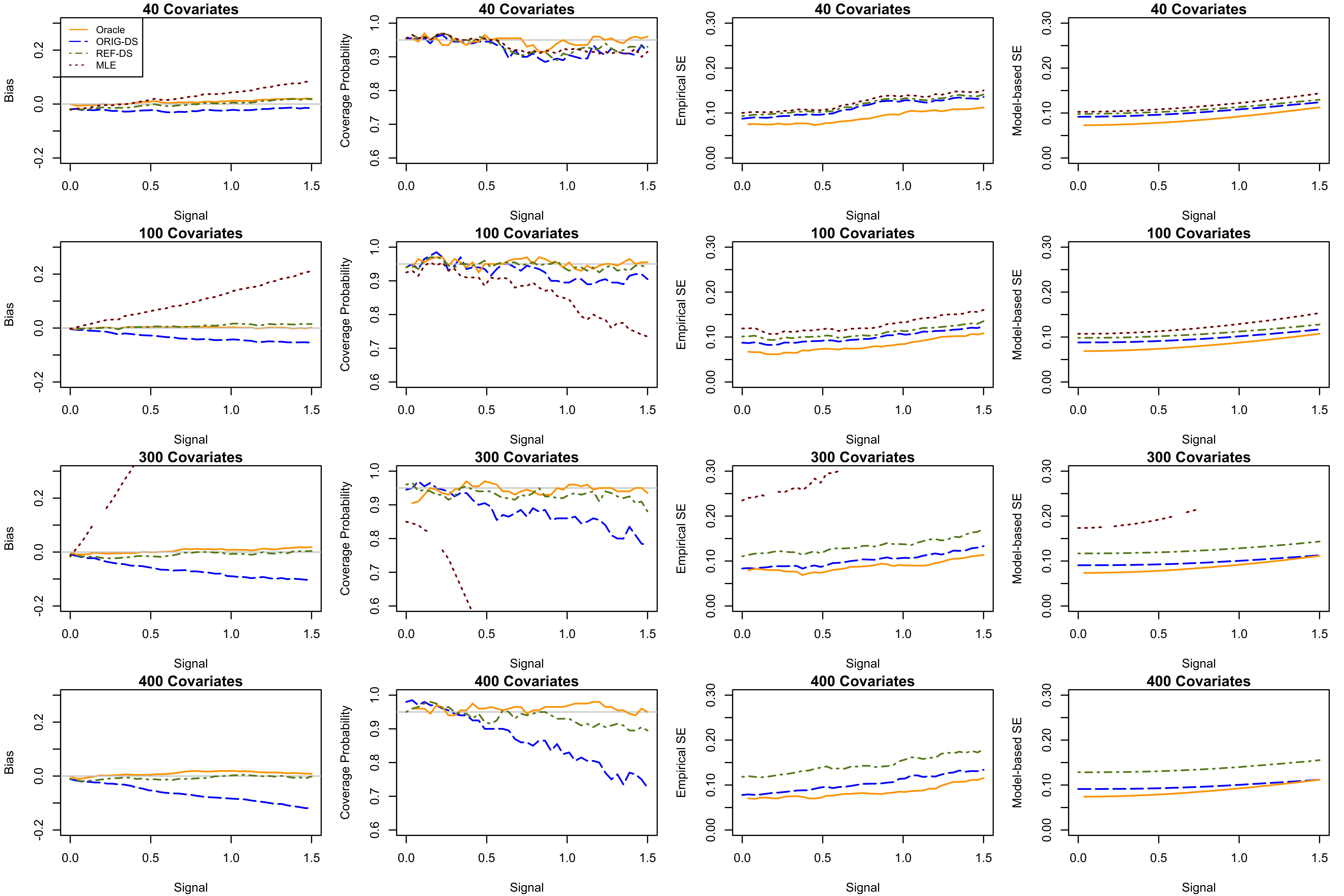}
			\caption{Simulation results: Bias, coverage probability, empirical standard error, and model-based standard error for $\beta_1^0$ in a logistic regression. Covariates are simulated with $\Sigma_x$ being AR(1) with $\rho=0.7$. The sample size is $n=1,000$ and the number of covariates $p = 40, 100, 300, 400$. MLE under the true model, denoted as ``\textit{Oracle}", is plotted as a reference.}
			\label{fig:logit_n1k_ar1}
		\end{figure}
	\end{landscape}

\end{document}